\newcommand{\norm}[1]{\left\lVert#1\right\rVert}
\newcommand{\abs}[1]{\left\lvert#1\right\rvert}
\newcommand{\comma}{,}
\newcommand{\QLayer}[3]{\gategroup[#1,steps=#2,style={dashed,rounded corners,fill=blue!20, inner xsep=2pt},background,
label style={label position=below,anchor=north,yshift=-0.2cm}]{{#3}}}
\pgfplotsset{compat=1.18}
\tikzstyle{startstop} = [rectangle, rounded corners, minimum width=3cm, minimum height=1cm,text centered, draw=black, fill=red!30]
\tikzstyle{io} = [trapezium, trapezium left angle=70, trapezium right angle=110, minimum width=3cm, minimum height=1cm, text centered, draw=black, fill=blue!30]
\tikzstyle{process} = [rectangle, minimum width=3cm, minimum height=1cm, text centered, draw=black, fill=orange!30]
\tikzstyle{decision} = [diamond, minimum width=3cm, minimum height=1cm, text centered, draw=black, fill=green!30]
\tikzstyle{blank_bounds} = [rectangle,rounded corners=0cm, minimum width=0cm, minimum height=0cm,text centered, draw=black, fill=white, line width=.6]
\tikzstyle{blank_rectangular} = [rectangle, minimum width=0cm, minimum height=0cm,text centered, draw=black, fill=white]
\tikzstyle{meter} = [rectangle,rounded corners=0cm, inner sep=10,  fill=white, minimum width=30,draw=black, line width=.6,
\tikzstyle{arrow} = [thick,->,>=stealth]
\tikzstyle{arrow_} = [thick,-,>=stealth]
\newtheorem{definition}{Definition}
\newtheorem{thm}{Theorem}
\newtheorem{lemma}{Lemma}
\newtheorem*{remark}{Remark}
\newenvironment{customthm}[1]{\innercustomthm}
  {\endinnercustomthm}
\DeclareMathOperator*{\argmin}{arg\,min}
\begin{document}

\title{Approximation and Generalization Capacities of Parametrized Quantum Circuits for Functions in Sobolev Spaces}

\author{Alberto Manzano}
\affiliation{Department of Mathematics and CITIC, Universidade da Coruña, Campus de Elviña s/n, A Coruña, Spain}
\email{alberto.manzano.herrero@udc.es}

\author{David Dechant}
\affiliation{$\langle aQa^L\rangle$ Applied Quantum Algorithms Leiden, The Netherlands}
\affiliation{Instituut-Lorentz, Universiteit Leiden, P.O. Box 9506, 2300 RA Leiden, The Netherlands} 

\author{Jordi Tura}
\affiliation{$\langle aQa^L\rangle$ Applied Quantum Algorithms Leiden, The Netherlands}
\affiliation{Instituut-Lorentz, Universiteit Leiden, P.O. Box 9506, 2300 RA Leiden, The Netherlands}

\author{Vedran Dunjko}
\affiliation{$\langle aQa^L\rangle$ Applied Quantum Algorithms Leiden, The Netherlands}
\affiliation{LIACS, Universiteit Leiden, P.O. Box 9512, 2300 RA Leiden, Netherlands}

\maketitle

\begin{abstract}
Parametrized quantum circuits (PQC) are quantum circuits which consist of both fixed and parametrized gates. 
In recent approaches to quantum machine learning (QML), PQCs are essentially ubiquitous and play the role analogous to classical neural networks. 
They are used to learn various types of data, with an underlying expectation that if the PQC is made sufficiently deep, and the data plentiful, the generalization error will vanish, and the model will capture the essential features of the distribution. 
While there exist results proving the approximability of square-integrable functions by PQCs under the $L^2$ distance, the approximation for other function spaces and under other distances has been less explored. 
In this work we show that PQCs can approximate the space of continuous functions, $p$-integrable functions and the $H^k$ Sobolev spaces under specific distances. 
Moreover, we develop generalization bounds that connect different function spaces and distances. 
These results provide a theoretical basis for different applications of PQCs, for example for solving differential equations. 
Furthermore, they provide us with new insight on the role of the data normalization in PQCs and of loss functions which better suit the specific needs of the users.
\end{abstract}

\section{Introduction}
Machine learning has gained significant attention in recent years for its practical applications and transformative impact in various fields.
As a consequence, there has been a rising interest in exploring the use of quantum circuits as machine learning models, capitalizing on the advancements in both fields to unlock new possibilities and potential breakthroughs.
Among the various possibilities for leveraging quantum circuits in machine learning, our particular focus lies on parametrized quantum circuits (PQC).
These quantum circuits consist of both fixed and adjustable (hence 'parametrized') gates.
When used for a learning task such as learning a function \cite{mitarai2018quantum}, a classical optimizer updates the parameters of the PQC in order to minimize a cost function depending on measurement results from this quantum circuit (see Figure \ref{fig: sketch of circuit and variational algorithm}).\\ \\

\begin{figure}
    \centering
    \begin{tikzpicture}[node distance=2.5cm, scale=1, every node/.style={transform shape}]
        \node (circuit) [startstop] {
        \begin{tikzpicture}[node distance=1.5cm]
            \node (initial) [blank_bounds,opacity=.0,text opacity=1.] {$\ket{0}^{\otimes n}$};
            \node (unitary) [blank_bounds, right of=initial] {$U(x,\bm{\theta})$};
            \node (measure) [meter,right of=unitary]{};
            \draw [arrow_] (initial) -- (unitary);
            \draw [arrow_] (unitary) -- (measure);
        \end{tikzpicture}
        };
        \node (expectation) [startstop, right of=circuit, xshift=2cm,fill=green!30] {$f_{\bm{\theta}}(x)$};
        \node (cost) [startstop, below of=expectation,fill=blue!30] {$D(f^*,f_{\bm{\theta}})$};
        \draw [arrow] (circuit) -- (expectation);
        \draw [arrow] (expectation) -- (cost);
        \draw [arrow] (cost) -- (0,-2.5) -- (circuit);
    \end{tikzpicture}

    \caption{Sketch of a hybrid variational algorithm. $U(x,\mathbb{\theta})$ represents a quantum circuit that takes $x$ as input and with variational parameters $\mathbb{\theta}$, $f_{\mathbb{\theta}}(x)$ is the expected value of some observable and $D(f^*,f_{\mathbb{\theta}})$ is the expected loss that we want to minimize. }
    \label{fig: sketch of circuit and variational algorithm}
\end{figure}
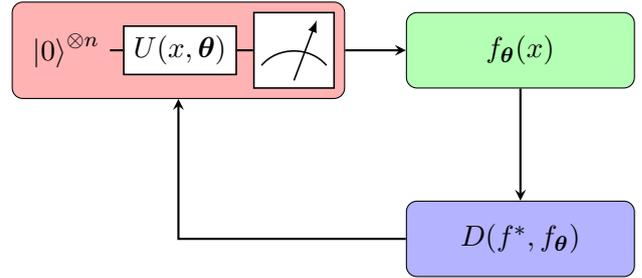
In this context, a growing line of research studies the expressivity of PQCs. 
More precisely, the capacity of PQCs to approximate any function belonging to a particular \textit{function space} defined in a prescribed \textit{domain} up to arbitrary precision with respect to a specific \textit{distance}. 
In \cite{schuld2021effect}, they showed that PQCs can be written as a generalized trigonometric series in the following way: 
\begin{equation}
\begin{aligned}\label{eq:circuit_fourier}
    f_{\bm{\theta}}(\bm{x}) &= \bra{0}U^{\dagger}(\bm{x};\bm{\theta})MU(\bm{x};\bm{\theta}) \ket{0} \\
    &=\sum_{\bm{\omega} \in \Omega} c_{\bm{\omega}}(\theta)e^{i\bm{\omega x}}\ .
\end{aligned} 
\end{equation}
 We would like to emphasize that although similar, the form of the PQC in above equation is more general than a Fourier series. This will become relevant for the results of this work.
 Using this formulation, it was further shown in \cite{schuld2021effect} that, if the PQC is chosen carefully, the increase of its depth and number of parameter can arbitrarily reduce the $L^2$ distance between the expected value $f_{\bm{\theta}}(\bm{x})$ of the PQC and any square-integrable function with the domain $[0,2\pi]^N$.
Throughout the paper we will refer to the PQC as the one approximating the functions to make the text more fluent, although technically it is the expectation value of the PQC that approximates the function.\\
This result had a significant impact on the motivation to study PQC-based QML, analogous to the impact that the famous Universality theorem for neural networks of Cybenko \cite{cybenko1989approximation} had on the domain of classical machine learning.
 Previously, different results on universality for PQC have been established. In \cite{du2020expressive}, the power of PQCs in expressing matrix product states and instantaneous quantum polynomial circuits was shown. Later, the universal approximation of PQCs was studied in regression problems, for single-qubit circuits with multiple layers \cite{perez2021one,yu2022power} and for both single- and multiple-qubit circuits \cite{goto2021universal,casas2023multidimensional}. 
However, as it turns out, there are numerous different notions of universality, and not all are useful for all applications. 
For instance, as will be discussed later, in the context of Physics-Inspired Neural Networks (PINN) the "vanilla" universality does not suffice.
This raises the question of whether PQCs can approximate functions belonging to other function spaces or in terms of other distances. \\ 

In this paper, we present two novel results. 
The first result of this paper is that PQCs can arbitrarily approximate the space of continuous functions, the space of $p$-integrable functions and the $H^k$ space, which is the set of functions whose derivatives up to order $k$ are square integrable.\\
Furthermore, we explain how these properties can be easily achieved in practice by a simple min-max feature rescaling (see \eqref{eq:min_max_feature_scaling}) of the input data.
In practice, this leads to an improved expressivity of PQCs, if the input data is normalized accordingly.\\ 

The second result of the paper are generalization bounds that connect distances with loss functions which are not built via the discretization of the integrals present in the definition of the distance. 
To make it more clear, we recall that in a machine learning problem one needs to choose an architecture, which defines the class of functions that can be approximated, and a target distance, which is intimately connected with the generalization error\footnote{In practice we may not explicitly think about the target distance, i.e. with respect to which distance we wish to approximate the "true" labeling function.
But this decision is implicitly made, once the loss is chosen.}. 
However, in general it is not possible to compute the target distance, as we would need to have available infinitely many data points.
Instead, one chooses a different distance function which can be computed from the available data: a loss function. 
This loss function is a different function than the target distance but it should be chosen in such a way that we call \textit{consistent} with the target distance, i.e., that the minimization of the expectation value of the loss function, the expected loss, yields the minimization of the target distance up to an error which asymptotically tends to zero \textit{when the number of samples and the expressivity} (here meant architecturally, as e.g. depth) of the PQC increases.
For example, the mean square error (as a loss function) is consistent with the $L^2$ error (as a target distance) but is inconsistent with the supremum distance. 
The usual generalization bounds connect target distances which are continuous with expected losses which are their discrete version. \\
The generalization bounds that we derive give a mapping \textit{across} different distances and loss functions, i.e., they relate distances with loss functions which are not built via the discretization of the integrals present in the definition of the distance. 
A particular loss function we shall define, denoted $\ell_{h^1}$, which consists of the sum of the mean square errors of the values of the functions and its derivative, is consistent with the supremum distance in one-dimensional problems.
In the described case, this allows us to reduce the supremum distance while choosing a loss function which is differentiable. \\ 

Our results apply in many settings.
For example, our first result has a direct consequence in that it allows one to approximate not most, but all function values with satisfying quality. 
For instance, the minimization of the ubiquitous $L^2$ distance may allow functions to dramatically differ from the target function in some regions where we have plenty of data points available, whereas the minimization of the supremum norm in Theorem \ref{thm:C_0_approximation} will force the PQC to converge for any given point in the domain of the target function. 
This is of high relevance in cases where we are interested in having a good approximation at any given point. 
For instance, when learning the shape of a probability distribution from samples, a good fit in the bulk of the distribution but not in its tails can lead to significant underestimation or overestimation of the probability of extreme events. 
In real-world applications, this could have severe consequences in risk assessment applications, where accurate estimation of tail probabilities is essential for developing appropriate contingency measures against rare but significant events, such as the COVID-19 pandemic or the 2008 economic crisis.
Our second result has direct applications, e.g., in settings where we have access to data of the function and its derivatives.
One case where this is standard is in settings involving solving differential equations.
For example in physics-informed neural networks (PINN) problems \cite{Raissi2017PINN} and differential machine learning (DML) \cite{huge2020differential}, both function values and derivatives are accessible and in fact critical. \\ 
 
This paper is organized as follows: in Section \ref{sec:data_normalization} we explain the new results on the expressivity of PQCs. 
In Section \ref{sec:differential_machine_learning} we discuss the proposed generalization bounds. 
Then, in Section \ref{sec:numerical_experiments} we illustrate the theoretical result of Sections \ref{sec:data_normalization} and \ref{sec:differential_machine_learning} by means of some numerical experiments. 
Lastly, in Section \ref{sec:conclusions} we wrap up with the conclusions. \\ 

During the final stages of our work, we became aware of the paper \cite{gonon2023universal} which overlaps in some parts with our own results in Section \ref{sec:data_normalization}. However, the results presented here were developed independently and follow a different line of reasoning.

\section{PQCs and universal approximation}
\label{sec:data_normalization}
In this section, we will review the established result on universality in \cite{schuld2021effect} and then present our new universality results in Theorems \ref{thm:L_p_approximation}, \ref{thm:C_0_approximation} and \ref{thm:H^k_approximation}.\\ \\
Schuld et al. showed in \cite{schuld2021effect}, how a quantum machine learning model of the form $f_{\theta}(x)=\bra{0}U^{\dagger}(x;\bm{\theta})MU(x;\bm{\theta}) \ket{0}$ can be written as a univariate generalized trigonometric series:
\begin{align}
    \bra{0}U^{\dagger}(x;\bm{\theta})MU(x;\bm{\theta}) \ket{0} &= f_{m}(x;\bm{\theta}) \\
    &=\sum_{\bm{\omega} \in \Omega} c_{\bm{\omega}}(\theta)e^{i\bm{\omega x}},
\end{align}
where $M$ is an observable, $U(x;\bm{\theta})$ is a quantum circuit modeled as a unitary that depends on input $x$ and the variational parameters $\bm{\theta}=\left(\theta_0,\theta_1,...,\theta_T \right)$.
In the above, $\bm{\omega}\in\Omega$ denotes the set of available frequencies which always contain $0$. 
The quantum circuit consists of $L$ layers each consisting of a trainable circuit block $W_i(\bm{\theta}), i\in\{1,...,L+1\}$ and a data encoding block $S(x)$ as shown in Figure \ref{fig:Schuld_circuit}.
The data encoding blocks determine which frequencies $\bm{\omega}$ are accessible in the sum and are implemented as Pauli rotations.
The blocks $W(\bm{\theta})$ can be built from single-qubit rotation gates and CNOT gates and they determine the coefficients $c_{\bm{\omega}}(\theta)$ of the sum. It is possible to both implement this model with $L>1$ layers, such as data re-uploading PQC \cite{gil2020input,perez2020data}, where the encoding is repeated on the same subsystems in sequence, or with parallel encodings \cite{rebentrost2014quantum} and $L=1$, where the encoding is repeated on several different subsystems.\\ \\

\begin{figure*}
   \centering
    \begin{adjustbox}{width=0.8\textwidth}
    \begin{quantikz}[scale=0.6, every node/.style={transform shape},column sep=0.2cm, row sep={1cm,between origins}]
        \lstick{$\ket{0}$} 
        & \phantomgate{}    & \gate[4,nwires=3]{S(x)} \QLayer{4}{2}{Layer 1} & \gate[4,nwires=3]{W_1(\bm{\theta})}    
        & \phantomgate{}    & \gate[4,nwires=3]{S(x)} \QLayer{4}{2}{Layer 2} & \gate[4,nwires=3]{W_2(\bm{\theta})}     
        & \phantomgate{}    & \gate[4,nwires=3]{S(x)} \QLayer{4}{2}{Layer L} & \gate[4,nwires=3]{W_L(\bm{\theta})}    
        & \phantomgate{}    & \meter{} \\
        \lstick{$\ket{0}$} 
        & \phantomgate{}    &                                                    & 
        & \phantomgate{}    &                                                    & 
        & \phantomgate{}    &                                                    &
        & \phantomgate{}    & \meter{}\\
        \vdots
        &                   &                                                    & 
        &                   &                                                    & 
        &\qquad\hdots\qquad &                                                    &
        &                   & \vdots\\
        \lstick{$\ket{0}$} 
        & \phantomgate{}    &                                                    &
        & \phantomgate{}    &                                                    &
        & \phantomgate{}    &                                                    & 
        & \phantomgate{}    &\meter{}
    \end{quantikz}
    \end{adjustbox}
    \label{fig:circuit_scheme}
   \caption{parametrized quantum circuit that can be written as a generalized trigonometric series as in \eqref{eq:circuit_fourier}. It consists of $L$ layers, each layer is composed by a trainable circuit block $W_i(\bm{\theta}), i\in\{1,...,L+1\}$ and a data encoding block $S(x)$. The data encoding blocks $S(x)$ are identical for all layers, they determine which frequencies $\bm{\omega}$ are accessible and are implemented as Pauli rotations. The blocks $W_i(\bm{\theta})$ can be built from local rotation gates and $CNOT$ gates. They determine the coefficients $c_{\bm{\omega}}(\theta)$.}
   \label{fig:Schuld_circuit}
\end{figure*}
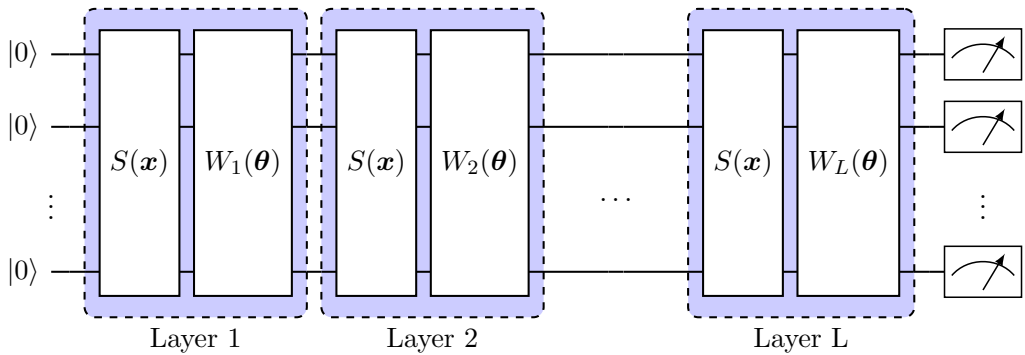

For the needs of our discussion, we will briefly describe a more specific set-up under which the authors of \cite{schuld2021effect} proved a universality theorem of these quantum models for the multivariate case with inputs $\bm{x} = \left(x_0,x_1,...,x_N\right)$.\\
Let us construct a model of the form in \eqref{eq:circuit_fourier},
with the measurement $M$ and a quantum circuit of one layer, $L=1$:
\begin{align}
    f_{\bm{\theta}}&=\bra{0}U^{\dagger}(\bm{\theta},\bm{x})MU(\bm{\theta},\bm{x})\ket{0}\ ,\text{ with }\\
    U(\bm{\theta},\bm{x})&=W^{(2)}(\bm{\theta}^{(2)})S(\bm{x})W^{(1)}(\bm{\theta}^{(1)})\ ,
\end{align}
where $\bm{\theta}^{(1)}$ and $\bm{\theta}^{(2)}$ are those parameters in $\bm{\theta}$ that affect $W^{(1)}$ and $W^{(2)}$, respectively. Let us further make the following two assumptions:
Firstly, we assume that the data-encoding blocks $S(\bm{x})$ are written in the following way:
\begin{align}
    S(\bm{x})&=e^{-x_1H}\otimes \cdots\otimes e^{-x_NH}\\
    &=:S_H(\bm{x})\ ,
\end{align}
where $H$ is a Hamiltonian that we specify later.
Secondly, we assume that the trainable circuit blocks $W^{(1)}(\bm{\theta}^{(1)})$ and $W^{(2)}(\bm{\theta}^{(2)})$ are able to represent arbitrary global unitaries.
In practice, this may require exponential circuit depth.
With this assumption, we drop the dependence on $\bm{\theta}$ and reformulate the assumption as being able to prepare an arbitrary initial state $\ket{\Gamma}:=W^{(1)}(\bm{\theta}^{(1)})\ket{0}$ and by absorbing $W^{(2)}(\bm{\theta}^{(2)})$ into the measurement $M$.
We can then write the above quantum model as:
\begin{align}
   f(\bm{x})=\bra{\Gamma}S_H^{\dagger}(\bm{x})MS_H(\bm{x})\ket{\Gamma} \ .
\end{align}
Let us further present the notion of a universal Hamiltonian family, as defined in \cite{schuld2021effect}:
\newpage
\begin{definition}
\label{definition: universal hamiltonian family} Let $\{H_m|m\in\mathbb{N}\}$ be a Hamiltonian family where $H_m$ acts on $m$ subsystems of dimension $d$. \\
Such a Hamiltonian family gives rise to a family of models $\{f_m\}$ in the following way:
\begin{align}\label{eq:family_of_models_dep_on_Hamiltonian_family}
   f_m(\bm{x)}=\bra{\Gamma}S_{H_m}^{\dagger}(\bm{x})MS_{H_m}(\bm{x})\ket{\Gamma} \ .
\end{align}
Further, we call the set
\begin{align}
    \Omega_{H_m}:=\{\mathbf{\lambda}_j-\mathbf{\lambda}_k|j,k\in\{1,...,d^m\}\}\ 
\end{align}
where $\{\lambda_1,...,\lambda_{d^m}\}$ are the eigenvalues of $H_m$, the frequency spectrum of $H_m$.\\
\end{definition}
\begin{remark}
We call a Hamiltonian family $\{H_m\}$
a universal Hamiltonian family, if for all $K\in\mathbb{N}$, there exists an $m\in\mathbb{N}$, such that:
\begin{align}
\mathbb{Z}_K=\{-K,...,0,...,K\}\subseteq\Omega_{H_m}\ ,
\end{align}
hence if the frequency spectrum of $\{H_m\}$ asymptotically contains any integer frequency.
\end{remark}
As shown in \cite{schuld2021effect}, a simple example of a universal Hamiltonian family is one which consists of tensor products of single-qubit Pauli gates:
\begin{align}
    H_m=\sum_{i=1}^m\sigma_{q}^{(i)}\ ,
\end{align}
with $\sigma_q^{(i)}$, $q\in\{X,Y,Z\}$ and $d=2$. The scaling of the frequency spectrum for this example goes as $K=m$.\\
With these definitions, we can give the following theorem:
\begin{thm}[Convergence in $L^2$]\label{thm:Schuld_universality} \cite{schuld2021effect}
    Let $\{H_m\}$ be a universal Hamiltonian family, and $\{f_m\}$ the associated quantum model family, defined
via \eqref{eq:family_of_models_dep_on_Hamiltonian_family}. For all functions $f^*\in L^2\left([0, 2\pi]^N\right)$, and for all $\epsilon >0$, there exists some $m'\in \mathbb{N}$, some state $\ket{\Gamma} \in \mathbb{C}^{m'}$
and some observable $M$ such that
\begin{equation}\label{eq:schuld_l2_approximation}
   \norm{f_{m'}-f^*}_{L^2}<\epsilon.
\end{equation}
\end{thm}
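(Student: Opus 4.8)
The plan is to reduce the claim to a statement about approximating functions in $L^2([0,2\pi]^N)$ by multivariate trigonometric polynomials with \emph{integer} frequencies, and then show that the quantum model family $\{f_m\}$ associated with a universal Hamiltonian family can realize any such trigonometric polynomial (with a measurement and a state of the appropriate dimension). First I would invoke the density of trigonometric polynomials in $L^2([0,2\pi]^N)$: for any $f^*\in L^2$ and any $\epsilon>0$, there is a finite $K\in\mathbb{N}$ and coefficients $\{\hat{f}_{\bm{\omega}}\}_{\bm{\omega}\in\mathbb{Z}_K^N}$ so that the truncated Fourier series $g=\sum_{\bm{\omega}\in\mathbb{Z}_K^N}\hat{f}_{\bm{\omega}}e^{i\bm{\omega}\cdot\bm{x}}$ satisfies $\norm{g-f^*}_{L^2}<\epsilon/2$. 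This is just Parseval/completeness of the Fourier basis on the torus, so I would cite it rather than prove it.

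The core step is then to show that $g$ lies in (or can be approximated arbitrarily well by) the model family. Since $\{H_m\}$ is a universal Hamiltonian family, there exists $m'$ with $\mathbb{Z}_K\subseteq\Omega_{H_{m'}}$; because the encoding is a tensor product $S_{H_{m'}}(\bm{x})=e^{-x_1H_{m'}}\otimes\cdots\otimes e^{-x_NH_{m'}}$, the accessible multivariate frequency set is $\Omega_{H_{m'}}^N\supseteq\mathbb{Z}_K^N$, so all the needed frequencies $\bm{\omega}$ in $g$ appear. Writing $S_{H_{m'}}(\bm{x})$ in the eigenbasis of $H_{m'}$ (eigenvalues $\lambda_j$, eigenvectors $\ket{j}$), one gets
\begin{align}
f_{m'}(\bm{x})=\sum_{\bm{j},\bm{k}}\overline{\Gamma_{\bm{j}}}\Gamma_{\bm{k}}\,M_{\bm{j}\bm{k}}\,e^{i(\bm{\lambda}_{\bm{j}}-\bm{\lambda}_{\bm{k}})\cdot\bm{x}},
\end{align}
where $\bm{\lambda}_{\bm{j}}=(\lambda_{j_1},\dots,\lambda_{j_N})$ and $M_{\bm{j}\bm{k}}=\bra{\bm{j}}M\ket{\bm{k}}$. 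Grouping terms by the frequency $\bm{\omega}=\bm{\lambda}_{\bm{j}}-\bm{\lambda}_{\bm{k}}$, the coefficient of $e^{i\bm{\omega}\cdot\bm{x}}$ is $\sum_{\bm{\lambda}_{\bm{j}}-\bm{\lambda}_{\bm{k}}=\bm{\omega}}\overline{\Gamma_{\bm{j}}}\Gamma_{\bm{k}}M_{\bm{j}\bm{k}}$. I would then choose $\ket{\Gamma}$ to be, say, the uniform superposition over the eigenbasis (so that $\overline{\Gamma_{\bm{j}}}\Gamma_{\bm{k}}$ is a nonzero constant for all $\bm{j},\bm{k}$) and solve for Hermitian $M$ so that these grouped coefficients match the desired $\hat{f}_{\bm{\omega}}$; since the $e^{i\bm{\omega}\cdot\bm{x}}$ are linearly independent and the target polynomial is real (we may assume $f^*$ real, or handle real and imaginary parts separately, giving $\hat{f}_{-\bm{\omega}}=\overline{\hat{f}_{\bm{\omega}}}$), one can pick the off-diagonal entries of $M$ consistently with Hermiticity. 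This reproduces $g$ exactly, so $\norm{f_{m'}-f^*}_{L^2}=\norm{g-f^*}_{L^2}<\epsilon$; in fact one can even get $\epsilon/2$ here and keep slack.

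The main obstacle I anticipate is the bookkeeping in the coefficient-matching step: the map from $(\ket{\Gamma},M)$ to the Fourier coefficients of $f_{m'}$ is highly underdetermined and nonlinear (bilinear in $\Gamma$ and linear in $M$), and one must verify that a valid choice exists respecting (i) Hermiticity of $M$, (ii) normalization of $\ket{\Gamma}$, and (iii) the pairing structure induced by possible degeneracies in the spectrum of $H_{m'}$ (several pairs $(\bm{j},\bm{k})$ may give the same $\bm{\omega}$, and the eigenvalue $0$ is always present, contributing to the constant term). The cleanest route is to fix $\ket{\Gamma}$ first so that all amplitude products are bounded away from zero, then observe that assigning $M_{\bm{j}\bm{k}}$ to distribute each target coefficient $\hat{f}_{\bm{\omega}}$ among the pairs mapping to $\bm{\omega}$ is a solvable (indeed over-complete) linear system; picking a symmetric distribution makes the Hermiticity constraint automatic. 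A secondary, purely technical point is reducing to real-valued $f^*$ and to $f^*$ bounded/smooth before truncating its Fourier series, but this is standard density argument padding and I would dispatch it quickly. Essentially, the theorem is a corollary of Fourier completeness plus the flexibility of the observable, and the proof is an existence argument rather than a construction.
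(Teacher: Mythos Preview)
The paper does not give its own proof of this statement; Theorem~1 is quoted from \cite{schuld2021effect} and only used as background. Your argument is correct and is essentially the one the paper attributes to that reference (see Appendix~A.1, where the paper recalls from Appendix~C of \cite{schuld2021effect} that the model $f_{m'}$ realizes any generalized trigonometric series on $\mathbb{Z}_K^N$ with coefficients freely chosen via the observable up to the conjugation symmetry $c_{-\bm{\omega}}=\overline{c_{\bm{\omega}}}$). Fixing $\ket{\Gamma}$ to be the uniform superposition and then solving the resulting linear system for the entries of a Hermitian $M$ is exactly the standard way to make this explicit, and density of trigonometric polynomials in $L^2([0,2\pi]^N)$ closes the argument.

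One minor remark: for the related Theorems~2--4 the paper does \emph{not} match the raw truncated Fourier series but instead matches the Fej\'er mean (the Ces\`aro average), because in $L^p$ for $p\neq 2$ and in $C^0$ the partial Fourier sums need not converge. Your direct use of the partial sums works precisely because of Parseval in the $L^2$ case, so your route is the natural one here and slightly simpler than the Fej\'er-based machinery the paper develops for the other spaces.
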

Here, we clearly see that there are two conditions on the target function $f^*$ that must be fulfilled in order for the theorem to work properly.
The first condition is that $f^*$ belongs to $L^2$.
This is not surprising, we need to assume certain regularity on the target function to make the theorem work.
The second condition is that the target function $f^*$ needs to be restricted to the domain $[0,2\pi]^N$.
However, as suggested in the original paper \cite{schuld2021effect}, if the function $f^*$ does not belong to this domain, we can easily map $[a,b]^N$ to the required domain $[0,2\pi]^N$ (or $[-\pi,\pi]^N$ equivalently).\\ 

We would like to highlight the fact that the distance we use to bring the approximator closer to the target function is the $L^2$ distance.
Note that convergence in the $L^2$ sense does not imply other modes of convergence.
For example, this does not give us information about the general case of $L^p$-distances, with $1\leq p <\infty$.
We explicitly address this more general case in the following theorem:
\begin{thm}[Convergence in $L^p$]\label{thm:L_p_approximation}
    Let $\{H_m\}$ be a universal Hamiltonian family, and $\{f_m\}$ the associated quantum model family, defined
via \eqref{eq:circuit_fourier}. For all functions $f^*\in L^p\left([0, 2\pi]^N\right)$ where $1\leq p<\infty$, and for all $\epsilon >0$, there exists some $m'\in \mathbb{N}$, some state $\ket{\Gamma} \in \mathbb{C}^{m'}$,
and some observable $M$ such that:
\begin{equation}\label{eq:L_p_approximation}
   \norm{f_{m'}-f^*}_{L^p}<\epsilon.
\end{equation}
\end{thm}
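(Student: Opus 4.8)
The plan is to reduce the statement to the $L^2$ case of Theorem~\ref{thm:Schuld_universality} by using the density of trigonometric polynomials in $L^p$ on a bounded cube. Recall that the proof of Theorem~\ref{thm:Schuld_universality} in \cite{schuld2021effect} works by representing, \emph{exactly} and with a suitable $\ket{\Gamma}$ and $M$, the truncated Fourier series of the target; more generally, the same construction realizes exactly any real trigonometric polynomial all of whose frequency vectors lie in $\Omega_{H_m}$, and since the family is universal one can make $\Omega_{H_m}$ contain any prescribed finite set of integer frequency vectors by taking $m$ large enough. This exact-representability property is the only feature of \cite{schuld2021effect} that the argument needs.

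First I would establish the density step: for $1\le p<\infty$, the real trigonometric polynomials with integer frequency vectors are dense in $L^p\left([0,2\pi]^N\right)$. Since $C_c^\infty\left((0,2\pi)^N\right)$ is dense in $L^p\left([0,2\pi]^N\right)$ in this range, it suffices to approximate a fixed $g\in C_c^\infty\left((0,2\pi)^N\right)$. Extending $g$ by zero yields a smooth $2\pi$-periodic function on $\mathbb{T}^N$, whose Fourier coefficients decay faster than any polynomial, so its Fourier series converges to it uniformly; uniform convergence on the finite-measure cube implies $L^p$ convergence, and a symmetric partial sum $T_K$ (frequencies in $\{-K,\dots,K\}^N$) is the desired trigonometric polynomial. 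If $g$ is real-valued, $T_K$ is automatically real since $\widehat{g}(-\bm{\omega})=\overline{\widehat{g}(\bm{\omega})}$; and $f^*$ may be taken real-valued, matching the fact that $f_{m'}$, being an expectation value, is real.

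Then, given $f^*\in L^p\left([0,2\pi]^N\right)$ and $\epsilon>0$, pick a real trigonometric polynomial $T$ with integer frequencies and $\norm{f^*-T}_{L^p}<\epsilon$; let $K$ be large enough that all frequency vectors of $T$ lie in $\{-K,\dots,K\}^N$; by universality of $\{H_m\}$ choose $m'$ with $\{-K,\dots,K\}^N\subseteq\Omega_{H_{m'}}$; then by the exact-representability property there exist $\ket{\Gamma}\in\mathbb{C}^{m'}$ and an observable $M$ with $f_{m'}=T$ identically, so that $\norm{f_{m'}-f^*}_{L^p}=\norm{T-f^*}_{L^p}<\epsilon$, which is the claim.

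For the restricted range $1\le p\le 2$ one can instead avoid re-opening the construction of \cite{schuld2021effect} entirely: since the cube has finite measure, H\"older's inequality gives $\norm{h}_{L^p}\le (2\pi)^{N(1/p-1/2)}\norm{h}_{L^2}$, so one truncates $f^*$ to a bounded (hence $L^2$) function $g$ with $\norm{f^*-g}_{L^p}<\epsilon/2$ by dominated convergence, applies Theorem~\ref{thm:Schuld_universality} to $g$ with tolerance $\epsilon/\bigl(2(2\pi)^{N(1/p-1/2)}\bigr)$, and concludes with the triangle inequality. The genuinely new content is the range $p>2$, where $L^2$ convergence no longer controls the $L^p$ norm; the trigonometric-polynomial route handles it precisely because it yields an \emph{exact} representation rather than an $L^2$-approximate one. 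The main obstacle is therefore not analytic but a matter of bookkeeping: carefully reading off from \cite{schuld2021effect} that the model family reproduces arbitrary integer-frequency trigonometric polynomials exactly, and tracking real-valuedness so that the (necessarily real) output $f_{m'}$ can match $T$.
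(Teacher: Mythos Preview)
Your proposal is correct. Both your argument and the paper's hinge on the same key observation---that the model family can reproduce \emph{exactly} any real trigonometric polynomial with integer frequencies, not merely partial Fourier sums---but they differ in how the density step in $L^p$ is handled. The paper identifies the PQC output with the $\ell^1$-Fej\'er mean $\sigma_{NK}(f^*)$ (choosing $c_{\mathbf j}=(1-\|\mathbf j\|_1/NK)\hat f^*_{\mathbf j}$) and then invokes the multidimensional Fej\'er theorem of Weisz to obtain $\|\sigma_{NK}(f^*)-f^*\|_{L^p}\to 0$ directly. You instead approximate $f^*$ by a $C_c^\infty$ function, extend periodically, and use rapid Fourier decay to land on a trigonometric polynomial close in $L^p$. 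Your route is more self-contained (no external Fej\'er-type theorem is cited) and makes explicit the point that exact representability---rather than $L^2$ approximation---is what drives the extension to $p>2$; the paper's Fej\'er-mean formulation is somewhat more constructive, in that it exhibits a concrete sequence of PQC outputs (with coefficients read off from $f^*$) converging to the target.
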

The proof of Theorem \ref{thm:L_p_approximation} is given in Appendix \ref{appendix:fourier_series_uniform_approximation}.

Let us emphasize the difference between Theorems \ref{thm:Schuld_universality} and \ref{thm:L_p_approximation}:
The target function can belong to any $L^p$ space with $1\leq p<\infty$ in contrast to the previous requirement of being square-integrable ($L^2$). 
This is essentially achieved by the fact that PQCs are not only able to represent Fourier series as it is discussed in \cite{schuld2021effect} but they are also able to represent more general trigonometric series. 
This allow us to identify the expectation value of the quantum circuit with the Cèsaro summation of the partial Fourier series of $f^*$ and leverage the power of Fejér-like theorems \cite{fejer1903untersuchungen}. 
See Appendix \ref{appendix:fourier_series_uniform_approximation} for more details.\\
Nevertheless, the ability to approximate functions in $L^p$ does not prevent us from having arbitrarily big errors in certain points.
Intimately related to this problem is the so-called Gibbs phenomenon \cite{gibbs}.
Namely, the approximation of a continuous, but non-periodic function by a Fourier series is increasingly better in the interior of the domain but increasingly poorer on its boundaries. 
That leads to the fundamental question if we can approximate $f^*$ in a stronger sense, so that we ensure that the target function $f^*$ is well approximated in any given point. We answer this question in the next theorem.

\begin{thm}[Convergence in $C^0$]\label{thm:C_0_approximation}
    Let $\{H_m\}$ be a universal Hamiltonian family, and $\{f_m\}$ the associated quantum model family, defined
via \eqref{eq:circuit_fourier}. For all functions $f^*\in C^0\left(U\right)$ where $U$ is compactly contained in the closed cube $[0,2\pi]^N$, and for all $\epsilon >0$, there exists some $m'\in \mathbb{N}$, some state $\ket{\Gamma} \in \mathbb{C}^{m'}$,
and some observable $M$ such that $f_{m'}$ converges uniformly to $f^*$:
\begin{equation}\label{eq:C_0_approximation}
   \norm{f_{m'}-f^*}_{C^0}<\epsilon,
\end{equation}
with \footnote{Since $f^*$ is defined on a compact domain $U$, the supremum is equivalent to the maximum in this case.}
\begin{align}
    \norm{f_{m'}-f^*}_{C^0}:=\sup_{\mathbf{x}\in [0,2\pi]^N}\|f_{m'}(\mathbf{x})-f^*(\mathbf{x})\| \ .
\end{align}
\end{thm}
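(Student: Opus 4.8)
The plan is to reduce the statement to the uniform approximation of a continuous \emph{periodic} function by trigonometric polynomials, and then to realize such polynomials exactly as expectation values of the PQC. The conceptual point is that a generic continuous function on $[0,2\pi]^N$ cannot be approximated uniformly by trigonometric polynomials (this is exactly the boundary/Gibbs obstruction alluded to before the theorem), so the hypothesis that $U$ is compactly contained in the interior $(0,2\pi)^N$ of the cube is precisely what makes uniform convergence available.

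First I would build a continuous periodic extension of $f^*$. Since $U$ is a compact subset of the open cube $(0,2\pi)^N$, the Tietze extension theorem produces a continuous $g$ on $[0,2\pi]^N$ with $g|_U=f^*$; choosing a cut-off $\chi$ that equals $1$ on $U$ and has support inside $(0,2\pi)^N$, the function $\tilde f^*:=\chi g$ still agrees with $f^*$ on $U$, is continuous, and vanishes near $\partial[0,2\pi]^N$, hence descends to a continuous function on the torus $\mathbb T^N=(\mathbb R/2\pi\mathbb Z)^N$. Next I would invoke the multivariate Fejér theorem: the Cesàro means $\sigma_K[\tilde f^*]$ of the multiple Fourier series of $\tilde f^*$ converge uniformly to $\tilde f^*$ on $\mathbb T^N$, so one can fix $K$ with $\norm{\sigma_K[\tilde f^*]-\tilde f^*}_{C^0}<\epsilon$. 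The function $\sigma_K[\tilde f^*]$ is a real trigonometric polynomial whose frequencies lie in $\{-K,\dots,K\}^N$.

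Finally I would transfer this polynomial to the quantum model. Because $\{H_m\}$ is a universal Hamiltonian family there is an $m$ with $\mathbb Z_K\subseteq\Omega_{H_m}$; using the $N$-fold tensor encoding $S_{H_m}$ on the resulting $Nm$ subsystems (relabelled $m'$) makes every frequency in $\{-K,\dots,K\}^N$ accessible in the series \eqref{eq:circuit_fourier}. Then, exactly as in the construction underlying Theorem \ref{thm:Schuld_universality} in \cite{schuld2021effect}, one chooses $\ket{\Gamma}\in\mathbb C^{m'}$ and the observable $M$ so that the coefficients $c_{\bm\omega}(\theta)$ reproduce the prescribed (Hermitian-symmetric) coefficients of $\sigma_K[\tilde f^*]$ — there is no constraint on their magnitude since $M$ can be rescaled freely — giving $f_{m'}\equiv\sigma_K[\tilde f^*]$ on $[0,2\pi]^N$. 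Using $\tilde f^*=f^*$ on $U$,
\begin{align}
   \norm{f_{m'}-f^*}_{C^0}\;\le\;\norm{f_{m'}-\tilde f^*}_{C^0}\;=\;\norm{\sigma_K[\tilde f^*]-\tilde f^*}_{C^0}\;<\;\epsilon ,
\end{align}
which is the claim; the details would be placed in Appendix \ref{appendix:fourier_series_uniform_approximation} alongside the proof of Theorem \ref{thm:L_p_approximation}.

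The main obstacle I anticipate is the extension step combined with the demand for an \emph{exact} representation: one must simultaneously preserve the values of $f^*$ on $U$, keep continuity, and obtain periodicity, and then verify carefully that the coefficient-realization argument of \cite{schuld2021effect} indeed yields arbitrary real trigonometric polynomials supported on the available multivariate frequency grid (rather than merely $L^2$-approximations). The remaining ingredients — Tietze extension, the multivariate Fejér theorem, and the frequency bookkeeping for the tensor-product encoding — are standard.
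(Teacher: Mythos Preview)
Your proposal is correct and follows essentially the same route as the paper: a Tietze extension combined with a compactly supported cut-off to obtain a periodic continuous extension on $\mathbb{T}^N$, the multivariate Fej\'er theorem for uniform approximation by Ces\`aro means, and the coefficient-realization argument from \cite{schuld2021effect} to implement the resulting trigonometric polynomial exactly as a PQC expectation value. The only cosmetic difference is that the paper specifies the $\ell^1$-Fej\'er mean $\sigma_{NK}(f)=\sum_{\mathbf{j}\in\mathbb{Z}_K^N}(1-\norm{\mathbf{j}}_1/(NK))\hat f_{\mathbf{j}}e^{i\mathbf{x}\cdot\mathbf{j}}$ explicitly and constructs the cut-off via convolution with a mollifier, while you leave these choices implicit.
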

The proof of Theorem \ref{thm:C_0_approximation} can be found in Appendix \ref{appendix:fourier_series_uniform_approximation}.\\
A set $U\subset\mathbb{R}^N$ is compactly contained in another set $V\subset\mathbb{R}^N$, if the closure of $U$ is compact and contained in the interior of $V$.\\
Simply stated, this theorem means that $f_{m'}$ converges uniformly to $f^*$.
In other words, if we select a given target error $\epsilon$ we are always able to find a finite PQC such that the error on any point is smaller than the prescribed $\epsilon$. 
Let us emphasize again the differences between Theorems \ref{thm:Schuld_universality} and \ref{thm:C_0_approximation}. 
The first difference is that the function $f^*$ has to be defined in a domain $U$ which is compactly contained in $\left[0,2\pi\right]^N$. 
A simple example of $U$ is the interval $\left(\left[-\frac{\pi}{2}, \frac{\pi}{2}\right]^N\right)$ (or $\left([0, \pi]^N\right)$, equivalently).
By restricting ourselves to half of the original space we can always find a $C^0$ extension of the function $f^*$ in $\mathbb{T}^N$. 
The second difference is that the target function now belongs to the class of continuous functions in contrast to the previous requirement of being square-integrable ($L^2$). \\ 

A last result that we will show in this regard is about the approximation of the function and its derivatives by the parametrized quantum circuit.
This might seem as a purely synthetic question but it has many implications in practice.
When we approximate a target function, in many occasions we not only want to recover its value but also its dynamics.
This is particularly relevant for problems in physics, where we typically have a differential equation which describes the behavior of the system.
As we will see in the following theorem, the universality results translate to functions defined in the Sobolev space $H^k$ as well:
\begin{definition} The Sobolev space $H^k(\Omega)$ is defined as the space of square integrable functions on a domain $\Omega\subseteq\mathbb{R}^N$ which derivatives up to order $k$ are square integrable as well:
\begin{align} 
f^{\alpha}=D^{\alpha}f\ , \text{ and }
\norm{f^{(\alpha)}}_2<\infty\ ,
\end{align}
for all $0\leq\abs{\alpha}\leq k$ and $D^{\alpha}:=\frac{\partial^{\abs{\alpha}}}{\partial x_1^{\alpha_1}...\partial x_N^{\alpha_N}}$.\\
The Sobolev norm $\norm{\cdot}_{H^k}$ is defined as
\begin{align}
    \norm{f}_{H^k}:=\left(\sum_{\abs{\alpha}\leq k}\int_{\Omega}\abs{D^{\alpha}f}^2\right)^{1/2}\ .
\end{align}
\end{definition}

\begin{thm}[Convergence in $H^k$]\label{thm:H^k_approximation}
    Let $\{H_m\}$ be a universal Hamiltonian family, and $\{f_m\}$ the associated quantum model family, defined
via \eqref{eq:circuit_fourier}. For all functions $f^*\in H^{k}\left(U\right)$ where $U$ is compactly contained in the closed cube $[0,2\pi]^N$, and for all $\epsilon >0$, there exists some $m'\in \mathbb{N}$, some state $\ket{\Gamma} \in \mathbb{C}^{m'}$,
and some observable $M$ such that $f_{m'}$ converges to $f^*$ with respect to the $H^k-$distance:
\begin{equation}\label{eq:H^k_approximation}
   \norm{f_{m'}-f^*}_{H^k}<\epsilon.
\end{equation}
\end{thm}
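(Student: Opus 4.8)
The plan is to push everything onto the flat torus $\mathbb{T}^N=[0,2\pi]^N$, where the $H^k$ norm is diagonalised by the Fourier basis, and then invoke the fact (already used in Theorem \ref{thm:Schuld_universality}) that a universal Hamiltonian family realises every trigonometric polynomial with integer frequencies. First I would exploit that $U$ is compactly contained in $[0,2\pi]^N$ to build a \emph{periodic} $H^k$ extension of $f^*$: applying a Sobolev extension operator gives $\tilde f\in H^k(\mathbb{R}^N)$ with $\tilde f|_U=f^*$; choosing a cutoff $\chi\in C_c^\infty((0,2\pi)^N)$ that is identically $1$ on a neighbourhood of $\overline U$ and setting $g:=\chi\,\tilde f$, we obtain $g\in H^k$ with support a compact subset of $(0,2\pi)^N$ and $g|_U=f^*$. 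Extending $g$ by $2\pi$-periodicity in each coordinate yields $g\in H^k(\mathbb{T}^N)$, and since every $f_m$ is real-valued we may assume $f^*$, hence $g$, is real-valued. Because $\norm{\cdot}_{H^k(U)}\le\norm{\cdot}_{H^k(\mathbb{T}^N)}$, it then suffices to approximate $g$ in $H^k(\mathbb{T}^N)$.

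Next I would truncate the Fourier series of $g$. On $\mathbb{T}^N$ the Sobolev norm is equivalent to the weighted $\ell^2$ norm $\big(\sum_{\omega\in\mathbb{Z}^N}(1+\abs{\omega}^2)^k\,\abs{\hat g(\omega)}^2\big)^{1/2}$ of the Fourier coefficients, which is finite precisely because $g\in H^k(\mathbb{T}^N)$. Hence the trigonometric polynomial $g_K(\bm x):=\sum_{\norm{\omega}_\infty\le K}\hat g(\omega)\,e^{i\omega\cdot\bm x}$ satisfies $\norm{g-g_K}_{H^k(\mathbb{T}^N)}^2\asymp\sum_{\norm{\omega}_\infty> K}(1+\abs{\omega}^2)^k\abs{\hat g(\omega)}^2$, the tail of a convergent series, so it tends to $0$ and we may fix $K$ with $\norm{g-g_K}_{H^k(\mathbb{T}^N)}<\epsilon$. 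Note that, in contrast to the $L^p$ and $C^0$ cases of Theorems \ref{thm:L_p_approximation} and \ref{thm:C_0_approximation}, plain partial sums already converge here because $H^k$ is a Hilbert space; one could equally use the Fejér-type means of Appendix \ref{appendix:fourier_series_uniform_approximation}. By Hermitian symmetry $\hat g(-\omega)=\overline{\hat g(\omega)}$, $g_K$ is real.

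Finally I would realise $g_K$ inside the model family. Because $\{H_m\}$ is a universal Hamiltonian family and $S_{H_m}(\bm x)=e^{-x_1H_m}\otimes\cdots\otimes e^{-x_NH_m}$, the frequencies accessible to $f_m$ form the $N$-fold Cartesian product of $\Omega_{H_m}$, which for $m$ large enough contains $\{-K,\dots,K\}^N$ and in particular the frequency support of $g_K$. Exactly as in the proof of Theorem \ref{thm:Schuld_universality}, for such an $m'$ one can then choose the state $\ket{\Gamma}\in\mathbb{C}^{m'}$ and a Hermitian observable $M$ (of unrestricted operator norm) whose induced coefficients match those of $g_K$, so that $f_{m'}=g_K$ on $[0,2\pi]^N$; alternatively it is enough to match the finitely many coefficients up to a small error, giving $\norm{f_{m'}-g_K}_{H^k(\mathbb{T}^N)}<\epsilon$. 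Either way, $\norm{f_{m'}-f^*}_{H^k(U)}=\norm{g_K-g}_{H^k(U)}\le\norm{g_K-g}_{H^k(\mathbb{T}^N)}<\epsilon$ (at worst $2\epsilon$ in the approximate case), which proves the theorem after relabelling $\epsilon$.

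\textbf{Main obstacle.} The delicate point is the very first step: producing a periodic $H^k$ extension that \emph{agrees with} $f^*$ on all of $U$. This is where compact containment of $U$ in $[0,2\pi]^N$ is indispensable — otherwise the truncated series on the torus exhibits the Gibbs phenomenon near the boundary and no continuous, let alone $H^k$, periodic extension need exist — and it mirrors the hypothesis already imposed in Theorem \ref{thm:C_0_approximation}. The remaining ingredients (equivalence of $\norm{\cdot}_{H^k(\mathbb{T}^N)}$ with the weighted Fourier $\ell^2$ norm, convergence of the truncated series, and monotonicity of the norm under restriction to $U$) are routine.
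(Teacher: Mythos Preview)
Your proposal is correct and follows essentially the same route as the paper: both build a periodic $H^k$ extension of $f^*$ to $\mathbb{T}^N$ via a Sobolev extension followed by multiplication with a cutoff supported in $(0,2\pi)^N$, then approximate in $H^k(\mathbb{T}^N)$ by a truncated Fourier series (the paper cites Canuto--Quarteroni, you supply the weighted $\ell^2$ argument directly), and finally realise that trigonometric polynomial by the quantum model exactly as in \cite{schuld2021effect}. Your use of a $C_c^\infty$ cutoff is in fact slightly cleaner than the paper's $H^k$ mollifier construction, and your remark that plain partial sums suffice here (no Fej\'er averaging needed) is precisely what the paper exploits.
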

The proof of Theorem \ref{thm:H^k_approximation} is given in Appendix \ref{appendix:fourier_series_uniform_approximation}.\\
As in Theorems \ref{thm:C_0_approximation} and \ref{thm:H^k_approximation}, we require that the target function is defined on a compactly contained subset of $[0,2\pi]^N$, we propose to perform a min-max feature scaling of the input data:
\begin{equation}
\begin{aligned}\label{eq:min_max_feature_scaling}
    \bm{x} = (x_1,...,x_n)
    \longrightarrow \ 
     \bm{\tilde{x}}= (\tilde{x}_1,...,\tilde{x}_n)\ ,
\end{aligned}
\end{equation}
where $\bm{x}\in [a,b]^N$, $\bm{\tilde{x}}\in \left[-\dfrac{\pi}{2},\dfrac{\pi}{2}\right]^N$, and 
\begin{align}
    \bm{\tilde{x}}= \left(-\dfrac{\pi}{2}+\pi\dfrac{x_1-a}{b-a},...,-\dfrac{\pi}{2}+\pi\dfrac{x_n-a}{b-a}\right)\ .
\end{align}
This simple recipe allows the PQC to approximate a much wider set of function spaces as shown throughout this section.
This normalization strategy works very well in practice as can be seen in Section \ref{sec:numerical_experiments}.
However, we would like to emphasize that this particular normalization is not the only choice.
The classical strategy in machine learning of normalizing the input data to lie in the $[-1,1]^N$ domain is also completely valid. \\
Throughout this section, we have discussed the expressive power of PQCs, but when we do machine learning, we have more ingredients that we need to take into account.
In the next section we will discuss the role that the loss function plays in accordance with the type of approximation that our PQC can get.

\section{Connections between different generalization bounds}\label{sec:differential_machine_learning}
As we have seen in the previous section, the notion of approximation depends on a prescribed distance.
This distance is not given by the problem itself, but instead chosen by the user, this is why we refer to it as target distance.
In general, it is however not possible to compute the target distance, which for example is the case for the $L^p$ and $H^k$ distances. 
This is why one needs to choose a distance function which can be computed from data, a loss function.
It has to be chosen in such a way that it is consistent with the target function.
To discuss the topic in more depth, let us formally introduce the continuous regression problem, which is the problem that we are most interested in.\\ 

In general, we can describe the continuous regression problem in the following way:
assume that there is some target function $f^*\in \mathcal{F}\subseteq H^k$ mapping inputs $x\in \mathcal{X}$ to target labels $y\in \mathcal{Y}$.
Moreover, assume that the points in $\mathcal{X}$ are sampled according to a bounded\footnote{It is possible to have more general density functions. However, we restrict ourselves with this one since it simplifies the analysis.} density function $p$.
Our goal is to find the best approximation $f\in \mathcal{M}\subseteq H^k$ of the target function $f^*$. 
\\ The notion of what is understood as a ``good'' approximation as clarified, allows for some freedom.
For this reason, one has to make a choice by specifying a functional $D : H^k\; \times \;  H^k \longrightarrow \mathbb{R}^+\cup \{0\}$ which defines a distance between the elements of $\mathcal{F}$ and $\mathcal{M}$.
The problem can then be stated as:
\begin{equation}\label{eq:minimization_problem}
    f = \argmin_{\hat{f}\in \mathcal{M}}D(f^*,\hat{f}).
\end{equation}

The most common distance in the literature for continuous regression problems is the one induced by the $L^2(\mathcal{X},P)$ norm:
\begin{equation}\label{eq:l2_problem}
\begin{aligned}
    D_{L^2}\left(f^*,f\right) &= ||f^*-f||_{L^2}\\ &= \left(\int_{\mathcal{X}}(f^*(\bm{x})-f(\bm{x}))^2dP\right)^{\frac{1}{2}}.
\end{aligned}
\end{equation}
However, in regression we do not typically have access to the full information (i.e., we cannot compute the integral).
It is for this reason that instead we work with the empirical risk minimization problem, which uses the discrete version $l^2$ of the $L^2$ distance as a loss function.
The difference with the previous setup is that, for the empirical risk minimization problem, we are given a finite training set $S$ of $I$ inputs sampled from the same probability density $p$, together with their target labels $\{(x_1, y_1),...,(x_I, y_I)\}$ with $(x, y)\in \mathcal{X}\times \mathcal{Y}$, according to the target function $f^*:\mathcal{X}\to \mathcal{Y}$, $f\in\mathcal{F}$.
Now, instead of minimizing a continuous functional, we will minimize a discrete one.
We call
\begin{align}
    D_{\ell}(f^*,f)=\frac{1}{I}\sum_{i = 0}^{I-1}  \ell \left( f^*(\bm{x}^i),f(\bm{x}^i)\right)
\end{align}
the expected loss according to a loss function $\ell:\mathcal{Y}\times \mathcal{Y}\to \mathbb{R}$.
Similarly to the continuous case, we are concerned with the expected loss of the $l^2$ distance, which is defined as:
\begin{align}\label{eq:empirical_risk_standard_loss}
    D_{l^2}(f^*,f) := \left(\frac{1}{I}\sum_{i = 0}^{I-1}  \left(f^*(\bm{x}^i)-f(\bm{x}^i)\right)^2\right)^{\frac{1}{2}},
\end{align}
with $\bm{x}^i$ denoting the i-th input.\\ 

Although we are solving the minimization problem associated with the expected loss defined in \eqref{eq:empirical_risk_standard_loss}, in general we are interested in the generalization performance, i.e., the distance in terms of \eqref{eq:l2_problem}.
Using generalization bounds \cite{mohri2018foundations} we can relate the performance in terms of the distance given by \eqref{eq:empirical_risk_standard_loss} with the distance given by \eqref{eq:l2_problem}.
However, these classical results in machine learning do in general not relate the $l^2$ distance with other distances, like the $C^0$ distance.
In other words, even a solution which, as the model and the number of points grow larger asymptotically makes the $D_{L^2}$ go to zero, does not necessarily make the $D_{C^0}$ distance vanish, which is defined as:
\begin{align}\label{eq:empirical_risk_uniform_loss}
    D_{C^0}(f^*,f) := \sup_{\mathbf{x}\in \mathcal{X}}\abs{f^*(\mathbf{x})-f(\mathbf{x})}.
\end{align}
In such cases, we could find points where there is an arbitrarily large discrepancy between the solution and the target function.\\
One possible solution would be to use a different distance than $D_{l^2}$. For example one could try the discrete form of the $D_{C^0}$ distance:
\begin{equation}\label{eq:max_distance}
    D_{l^\infty} = \max_{i\in \{0,...,I-1\}}\abs{f^*(\mathbf{x}^i)-f(\mathbf{x}^i)},
\end{equation}
but this distance is not differentiable, making the optimization process much harder.\\ 

Thus, we identify two desirable features for a distance in order to be able to approximate with the $C^0$ distance.
The first requirement is that the solution of the minimization problem that it defines, tends uniformly to the target function $f^*$ as we increase the number of given points $I$ and we increase the size of our PQC. 
The second one is that it has to be differentiable in order to make minimization easier.\\
The solution that we propose here is to use a distance motivated by discretizing the Sobolev distance $H^k$ on a fixed finite training set $\Big\{\left(x_1, f^*(\bm{x}^1),\{D^{\alpha}f^*(\bm{x}^1)\}_{|\alpha|\leq k}\right),...,\left(x_I, f^*(\bm{x}^I),\right.$\\ $\left.\{D^{\alpha}f^*(\bm{x}^I)\}_{|\alpha|\leq k}\right)\Big\}$, $(x, y)\in \mathcal{X}\times \mathcal{Y}$, according to the target function $f^*:\mathcal{X}\to \mathcal{Y}$, $f\in\mathcal{F}$. The sets $\{D^{\alpha}f(\bm{x})\}_{|\alpha|\leq k}$ and $\{D^{\alpha}f^*(\bm{x})\}_{|\alpha|\leq k}$ consists of the $M(N,k):=\sum_{\alpha=1}^k\binom{\alpha +N-1}{N-1}$ different partial derivatives up to order $k$ evaluated at point $\bm{x}$. We write $N$ for the number of input dimensions. Note that for being able to apply this distance, one needs to have access to training data containing the required partial derivatives additionally to the function values.\\ 
We show the expected loss of the discretized version of $H^1$ and $H^k$, respectively, in the following two equations:

\begin{widetext}
\begin{align}
\label{eq:h1_risk_problem}
 D_{h^1}(f^*,f)&:=\left[\frac{1}{I}\sum_{i = 0}^{I-1}\left(f^*(\bm{x}^i)-f(\bm{x}^i)\right)^2+ \sum_{j = 0}^{N-1}\sum_{i = 0}^{I-1}\frac{1}{I}\left(\dfrac{\partial f^*}{\partial x_j}(\bm{x}^i)-\dfrac{\partial f}{\partial x_j}(\bm{x}^i)\right)^2\right]^{\frac{1}{2}},\\
   \label{eq:hk_risk_problem}
    D_{h^k}(f^*,f)&:=\left[\frac{1}{I}\sum_{i = 0}^{I-1}\left(f^*(\bm{x}^i)-f(\bm{x}^i)\right)^2+ \sum_{\abs{\alpha}\leq k}\sum_{i = 0}^{I-1}\frac{1}{I}\left(D^{\alpha}f^*(\bm{x}^i)-D^{\alpha}f(\bm{x}^i)\right)^2\right]^{\frac{1}{2}}.
\end{align}
\end{widetext}

 The expected loss as given in \eqref{eq:h1_risk_problem} was first introduced in \cite{huge2020differential} and gives rise to a new subfield of machine learning known in the literature as differential machine learning (DML). 
 Its generalization, the discretization of the distance $H^k$, is given in \eqref{eq:hk_risk_problem}, and can be applied when the required higher-dimensional derivatives are available as well.
 The derivatives $\frac{\partial^{(p)} f^*}{\partial x_j^{(p)}}$ and $\frac{\partial^{(p)} f}{\partial x_j^{(p)}}$ are the $p$-th order derivative functions in direction $x_j$ of $f^*$ and $f$, respectively. The corresponding loss function is thus defined as
\begin{align}
    \label{eq:lossfunctionasmap}
    & \ell_{h^k}:\mathbb{R}^{M(N,k)+1}\times\mathbb{R}^{M(N,k)+1}\to \mathbb{R}, \\
        &\left(f(\bm{x}),\{D^{\alpha}f(\bm{x})\}_{|\alpha|\leq k},f^*(\bm{x}),\{D^{\alpha}f^*(\bm{x}):\}_{|\alpha|\leq k}\right)\mapsto\nonumber \\
        & \ell_{h^k}(f(\bm{x}),f^*(\bm{x}))=\left(f^*(\bm{x})-f(\bm{x}))\right)^2\\
        &+\sum_{\abs{\alpha}\leq k}\left(D^{\alpha}f^*(\bm{x})-D^{\alpha}f(\bm{x}))\right)^2\nonumber\ .
\end{align} 
With classical neural networks, DML has proven to yield better generalization results in terms of the $D_{l^2}$ distance than the solution of the $D_{l^2}$ itself.
This means that, if we take the solutions $f_{h^1}$ and $f_{l^2}$ of the minimization problems defined by Equations \eqref{eq:h1_risk_problem} with the same number of labels and \eqref{eq:empirical_risk_standard_loss} respectively and evaluate their performance in terms of the $D_{L^2}$, in practice $f_{h^1}$ performs better than $f_{l^2}$:
\begin{align}
    D_{L^2}\left(f^*,f_{h^1}\right)\leq D_{L^2}\left(f^*,f_{l^2}\right).
\end{align}
However, to the best of our knowledge there is no theoretical explanation in the literature on why this happens or under which condition we might expect this behavior. In the following theorems we present generalization bounds that shed some lights onto it.\\
Before stating them, we will define two function families to which the generalization bounds apply:
\begin{definition}\cite{Caro2021encodingdependent}
\label{def:functionfamilies}
     By $\mathcal{F}_{\Omega}^{B}$, we denote the function family defined as
\begin{align*}
    \mathcal{F}_{\Omega}^{B}=\Big\{&[0,2\pi]^N\ni \bm{x}\mapsto f(\bm{x})=\sum_{\bm{\omega}\in\Omega}c_{\bm{\omega}}\exp(-i\bm{\omega}\bm{x}):\\
    &\{c_{\bm{\omega}}\}_{\bm{\omega}\in\Omega} \text{ s.t. } \|f\|_{\infty}\leq B\text{ and }|\Omega|<\infty\Big\}\ .
\end{align*}
By $\mathcal{H}_{\Omega}^{\Tilde{B}}$, we denote the function family defined as
\begin{align*}
    \mathcal{H}_{\Omega}^{\Tilde{B}}= \Big\{&[0,2\pi]^N\ni \bm{x}\mapsto \frac{a_0}{2}\\
    &+\sum_{\bm{\omega}\in\Omega_+}(a_{\bm{\omega}}\cos(\bm{\omega}\bm{x})+b_{\bm{\omega}}\sin(\bm{\omega}\bm{x})):\\  
    &\sqrt{a_0^2+\sum_{\bm{\omega}\in\Omega_+}a_{\bm{\omega}}^2+ b_{\bm{\omega}}^2}\leq \tilde{B}\text{ and }|\Omega_+|<\infty\Big\}\ ,
\end{align*}
where the frequency set $\Omega$ is divided into the disjoint parts $\Omega=\Omega_+\cup \Omega_-\cup \{0\}$, where $\Omega_+\cap \Omega_-=\emptyset$ and such that for every $\bm{\omega}\in\Omega_+$, it holds that $-\bm{\omega}\in\Omega_-$.
\end{definition}
According to \cite{Caro2021encodingdependent}, both of these function families can be modeled by the quantum model given in Equation \eqref{eq:family_of_models_dep_on_Hamiltonian_family}. As can be seen by this equation, the bounds $B$ and $\Tilde{B}$
depend on the chosen circuit and observable, and they determine the scaling in the following generalization bounds.
Note as well that the truncated Fourier series as defined in $\mathcal{F}_{\Omega}^{B}$ and $\mathcal{H}_{\Omega}^{\Tilde{B}}$ are differentiable, and their derivatives form truncated Fourier series as well. 
If one chooses the frequency set $\Omega'$ and the bounds $B'$ and $\Tilde{B}'$ large enough, for a given function family $\mathcal{F}$, both the functions and their derivatives are contained in $\mathcal{F}_{\Omega'}^{B'}$ and $\mathcal{H}_{\Omega'}^{\Tilde{B}'}$.

\begin{thm}[Generalization bound for $H^k$]
\label{thm:sobolev_loss_Hk}

Let $f^*\in \mathcal{F}\subseteq H^k([0,2\pi]^N)$ be a target function, and let there be a $B>0$ and a $\Tilde{B}>0$, such that $\mathcal{F}_{\Omega}^{B}\subseteq \mathcal{H}_{\Omega}^{\Tilde{B}}$ is a suitable model family.
Let us further assume that $\ell_{h^k}(f_1(\bm{x}),f_2(\bm{x}))\leq c$ for all $\mathbf{x}\in [0,2\pi]^N$, and for all $f_1, f_2 \in\mathcal{F}_{\Omega}^{B}$ or $\mathcal{F}$. 
For any $\delta\in(0,1)$ and the empirical risk $D_{h^k}\left(f^*,f\right)$ trained on an i.i.d. training data $S$ with size $I$ and containing data of $\xi$ partial derivatives, the following holds for all functions $f\in \mathcal{F}_{\Omega}^{B}$ with probability at least $1-\delta$:
\begin{align}
     D_{H^k}\left(f^*,f\right) \leq D_{h^k}\left(f^*,f\right)+r(|\Omega|,\xi,B,\Tilde{B}, c,I,\delta),
\end{align}
where $r(|\Omega|,\xi,B,\Tilde{B}, c,I,\delta)\to 0$ as $I\to\infty$.
\end{thm}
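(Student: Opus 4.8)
The plan is to decompose the generalization error into a standard statistical-learning term that is controlled by Rademacher complexity and a deterministic term, and then to exploit the fact that all derivatives of truncated trigonometric series are again truncated trigonometric series of a controlled size. First I would observe that
\begin{align*}
D_{H^k}(f^*,f)^2 = \sum_{|\alpha|\le k}\int_{[0,2\pi]^N}\big(D^\alpha f^*(\bm{x})-D^\alpha f(\bm{x})\big)^2\,d\bm{x},
\end{align*}
and that the integrand for each multi-index $\alpha$ is, up to normalization, the expectation over $\bm{x}\sim\mathrm{Unif}([0,2\pi]^N)$ of the nonnegative loss-like quantity $(D^\alpha f^* - D^\alpha f)^2$. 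Since $f\in\mathcal{F}_\Omega^B$ and $f^*\in\mathcal{F}$ both have the property that their partial derivatives up to order $k$ are truncated Fourier series with frequencies in the same finite set $\Omega$ and with coefficients that are bounded in terms of $B$, $\tilde B$ and the largest frequency magnitude in $\Omega$ (each differentiation multiplies the $\bm{\omega}$-coefficient by a factor $\prod_j (i\omega_j)^{\alpha_j}$ of modulus at most $\|\Omega\|_\infty^{|\alpha|}$), all the functions $D^\alpha f - D^\alpha f^*$ live in a bounded function class, and the per-point loss $\ell_{h^k}$ is bounded by $c$ by hypothesis.

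Next I would apply a uniform convergence / Rademacher-complexity bound (as in \cite{mohri2018foundations}, or the encoding-dependent bounds of \cite{Caro2021encodingdependent}) to the class $\{\bm{x}\mapsto \ell_{h^k}(f(\bm{x}),f^*(\bm{x})) : f\in\mathcal{F}_\Omega^B\}$. Because this class consists of (fixed finite) sums of squares of truncated trigonometric polynomials with coefficient vectors of $\ell_2$-norm bounded in terms of $B,\tilde B,\xi,|\Omega|$, its empirical Rademacher complexity on $I$ i.i.d. samples is $O(\mathrm{poly}(|\Omega|,\xi,B,\tilde B)/\sqrt{I})$; composing with the squaring nonlinearity (Lipschitz on the bounded range, contraction lemma) and using the boundedness by $c$ gives, with probability at least $1-\delta$, uniformly over $f\in\mathcal{F}_\Omega^B$,
\begin{align*}
\Big|\tfrac{1}{I}\sum_{i}\ell_{h^k}(f(\bm{x}^i),f^*(\bm{x}^i)) - \mathbb{E}_{\bm{x}}\,\ell_{h^k}(f(\bm{x}),f^*(\bm{x}))\Big| \le r_0,
\end{align*}
where $r_0 = O\!\big(\mathrm{poly}(|\Omega|,\xi,B,\tilde B)\,c\,\sqrt{\log(1/\delta)/I}\big)\to 0$ as $I\to\infty$. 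Identifying $\mathbb{E}_{\bm{x}}\,\ell_{h^k}(f,f^*) = \frac{1}{(2\pi)^N}D_{H^k}(f^*,f)^2$ and the empirical average with $D_{h^k}(f^*,f)^2$, I then take square roots, using the elementary inequality $\sqrt{a}\le\sqrt{b}+\sqrt{|a-b|}$, to turn the bound on the squared distances into the claimed bound $D_{H^k}(f^*,f)\le D_{h^k}(f^*,f)+r$ with $r=r(|\Omega|,\xi,B,\tilde B,c,I,\delta)=(2\pi)^{N/2}D_{h^k}+\sqrt{r_0}-D_{h^k}$ suitably rearranged, or more cleanly by carrying the $(2\pi)^N$ normalization throughout; either way $r\to 0$ as $I\to\infty$.

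The main obstacle I anticipate is bookkeeping the dependence of the Rademacher complexity on the derivative order: differentiating a frequency-$\bm{\omega}$ term up to order $k$ inflates its coefficient by $\|\Omega\|_\infty^{|\alpha|}$, so the effective coefficient bound for the class of derivative-differences grows like $\max(B,\tilde B)\cdot\|\Omega\|_\infty^{k}\cdot\sqrt{\xi}$, and one must check that this still yields a quantity that is finite for fixed architecture and vanishes as $I\to\infty$ — which it does, since $|\Omega|$, $\xi$, $B$, $\tilde B$ and $k$ are all fixed once the model family is chosen, and only $I$ is sent to infinity. A secondary subtlety is that $\mathcal{F}_\Omega^B$ is specified via an $L^\infty$ bound on $f$ rather than directly on its Fourier coefficients, so I would pass through the containment $\mathcal{F}_\Omega^B\subseteq\mathcal{H}_\Omega^{\tilde B}$ (which is exactly what the hypothesis provides) to get an $\ell_2$ bound on the coefficient vector, and then use Parseval to bound the coefficient vectors of the derivatives. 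Everything else — the symmetrization, the contraction lemma for the square, the union over the finitely many multi-indices $\alpha$ with $|\alpha|\le k$, and McDiarmid for the high-probability statement — is routine.
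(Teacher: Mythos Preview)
Your proposal is correct and follows essentially the same route as the paper's own proof: both invoke the Rademacher-complexity generalization bound of \cite{Caro2021encodingdependent} for the class $\{\bm{x}\mapsto \ell_{h^k}(f(\bm{x}),f^*(\bm{x}))\}$, split $\ell_{h^k}$ into the sum of $\ell_{l^2}$ terms over the multi-indices $|\alpha|\le k$, apply Talagrand's contraction lemma to peel off the square, and finish with the standard high-probability concentration. The only bookkeeping difference is that the paper handles the derivative terms by \emph{enlarging} $\Omega$, $B$, $\tilde B$ up front so that all $D^\alpha f$ (for $|\alpha|\le k$) lie in the same $\mathcal{F}_\Omega^B$, then relaxes the single supremum over $f$ to independent suprema over each $D^\alpha f$ to obtain $\hat{\mathcal R}_S(\mathcal G)\le \xi\,\hat{\mathcal R}_{S|_x}(\mathcal F_\Omega^B)$; you instead track the explicit coefficient inflation $\|\Omega\|_\infty^{|\alpha|}$ from differentiation and pass through $\mathcal{F}_\Omega^B\subseteq\mathcal{H}_\Omega^{\tilde B}$ via Parseval --- this is more informative about constants but amounts to the same thing, and the paper's choice to absorb everything into $B,\tilde B$ is what makes the stated $r$ depend only on $|\Omega|,\xi,B,\tilde B,c,I,\delta$ without an explicit $\|\Omega\|_\infty^k$ factor.
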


\begin{thm}[Generalization bound for $L^p$]
\label{thm:sobolev_loss_Lp}

Let $f^*\in \mathcal{F}\subseteq H^k([0,2\pi]^N)$ be a target function, and let there be a $B>0$ and a $\Tilde{B}>0$, such that $\mathcal{F}_{\Omega}^{B}\subseteq \mathcal{H}_{\Omega}^{\Tilde{B}}$ is a suitable model family.
Let us further assume that $ \ell_{h^k}(f_1(\bm{x}),f_2(\bm{x}))\leq c$ for all $\mathbf{x}\in [0,2\pi]^N$, and for all $f_1, f_2 \in\mathcal{F}_{\Omega}^{B}$ or $\mathcal{F}$. 
Assume that $k,p\in\mathbb{N}$ satisfy one of the two following cases:
\begin{enumerate}
    \item $N\left(\frac{1}{2}-\frac{1}{p}\right)<k<N/2$ and $1\leq p<N$.
    \item $k\geq N/2$  and $1\leq p<\infty$.
\end{enumerate}
For any $\delta\in(0,1)$ and the empirical risk $D_{h^k}\left(f^*,f\right)$ trained on an i.i.d. training data $S$ with size $I$ and containing data of $\xi$ partial derivatives, the following holds for all functions $f\in \mathcal{F}_{\Omega}^{B}$ with probability at least $1-\delta$:
\begin{align}
    \dfrac{1}{C}D_{L^p}\left(f^*,f\right) \leq D_{h^k}\left(f^*,f\right)+r(|\Omega|,\xi,B,\Tilde{B}, c,I,\delta),
\end{align}
where $C$ is a constant and $r(|\Omega|,\xi,B,\Tilde{B}, c,I,\delta)\to 0$ as $I\to\infty$.
\end{thm}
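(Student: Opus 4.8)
The plan is to split the statement into a deterministic, purely functional-analytic inequality relating the $L^p$ and $H^k$ distances on the cube, and a probabilistic part that is already isolated in Theorem~\ref{thm:sobolev_loss_Hk}. First I would prove that there is a constant $C$ (depending only on $N$, $k$, $p$ and the sup-bound of the sampling density) such that $D_{L^p}(f^*,f)\le C\,D_{H^k}(f^*,f)$ for every admissible pair $(f^*,f)$. Then I would invoke Theorem~\ref{thm:sobolev_loss_Hk} to replace $D_{H^k}(f^*,f)$ by $D_{h^k}(f^*,f)+r(\abs{\Omega},\xi,B,\Tilde B,c,I,\delta)$ on an event of probability at least $1-\delta$. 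Dividing the first inequality by $C$ and substituting yields $\tfrac{1}{C}D_{L^p}(f^*,f)\le D_{h^k}(f^*,f)+r(\abs{\Omega},\xi,B,\Tilde B,c,I,\delta)$ with exactly the same remainder, which already has the required decay as $I\to\infty$.

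For the deterministic step, note that $f\in\mathcal{F}_{\Omega}^{B}$ is a finite trigonometric polynomial, hence smooth, while $f^*\in\mathcal{F}\subseteq H^{k}([0,2\pi]^N)$; therefore $g:=f^*-f\in H^{k}([0,2\pi]^N)$ and $D_{H^k}(f^*,f)=\norm{g}_{H^k}$. The key input is the Sobolev embedding theorem on $[0,2\pi]^N$, which is a bounded Lipschitz domain and hence supports all the standard embeddings. In case~1, the hypothesis $N\bigl(\tfrac12-\tfrac1p\bigr)<k$ is exactly the statement that $p$ lies strictly below the critical Sobolev exponent $2^{\ast}:=\tfrac{2N}{N-2k}$ of $H^k$ (finite and positive because $k<N/2$), so $H^{k}([0,2\pi]^N)\hookrightarrow L^{p}([0,2\pi]^N)$ continuously; the auxiliary restriction $p<N$ plays no role here. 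In case~2, $k\ge N/2$ gives $H^{k}([0,2\pi]^N)\hookrightarrow L^{p}([0,2\pi]^N)$ for every finite $p$ --- via the embedding into $L^{\infty}$ on a finite-measure domain when $k>N/2$, and via the borderline embedding into every $L^p$ with $p<\infty$ when $k=N/2$. In either case one obtains a constant $C_0=C_0(N,k,p)$ with $\norm{g}_{L^p(dx)}\le C_0\,\norm{g}_{H^k}$. Finally, since the sampling density is bounded by hypothesis, there is a constant $\kappa$ (the $p$-th root of its essential supremum) with $\norm{g}_{L^p(dP)}\le \kappa\,\norm{g}_{L^p(dx)}$, so $C:=\kappa C_0$ gives the deterministic inequality $D_{L^p}(f^*,f)\le C\,D_{H^k}(f^*,f)$.

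Chaining the two steps proves the theorem. The parts needing genuine care are the index bookkeeping --- verifying that the two cases on $(k,p,N)$ match exactly the regime in which $H^k$ embeds continuously into $L^p$ with $p<\infty$ --- and confirming that multiplying through by the fixed constant $C$ leaves both the remainder $r$ and the confidence level $1-\delta$ untouched. All of the genuinely statistical content --- the complexity control of $\mathcal{F}_{\Omega}^{B}$ together with its derivatives and the boundedness constant $c$, which appear inside $r$ --- is inherited verbatim from Theorem~\ref{thm:sobolev_loss_Hk}, so no new concentration estimate is needed; this is also why the same functional form of $r$ appears in both statements.
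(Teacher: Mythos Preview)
Your proposal is correct and follows essentially the same two-step structure as the paper: first establish the deterministic inequality $D_{L^p}\le C\,D_{H^k}$ via a Sobolev embedding on the cube, then invoke Theorem~\ref{thm:sobolev_loss_Hk} for the probabilistic bound $D_{H^k}\le D_{h^k}+r$. The only cosmetic differences are that the paper cites the Rellich--Kondrachov theorem (which gives compactness of the embedding, stronger than what is actually used) whereas you appeal directly to the continuous Sobolev embedding, and that you additionally account for the passage from $L^p(dx)$ to $L^p(dP)$ via the bounded density---a detail the paper leaves implicit.
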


\begin{thm}[Generalization bound for $C^0$]\label{thm:sobolev_loss_C0}
Let $f^*\in \mathcal{F}\subseteq H^k([0,2\pi]^N)$ be a target function, and let there be a $B>0$ and a $\Tilde{B}>0$, such that $\mathcal{F}_{\Omega}^{B}\subseteq \mathcal{H}_{\Omega}^{\Tilde{B}}$ is a suitable model family.
Let us further assume that $ \ell_{h^k}(f_1(\bm{x}),f_2(\bm{x}))\leq c$ for all $\mathbf{x}\in [0,2\pi]^N$, and for all $f_1, f_2 \in\mathcal{F}_{\Omega}^{B}$ or $\mathcal{F}$ and that $\|f\|_{\infty}\leq B$ for all $f \in\mathcal{F}_{\Omega}^{B}$.
Assume, that $k\in\mathbb{N}$ satisfies $k>N/2$.
For any $\delta\in(0,1)$ and the empirical risk $D_{h^k}\left(f^*,f\right)$ trained on an i.i.d. training data $S$ with size $I$ and containing data of $\xi$ partial derivatives, the following holds for all functions $f\in \mathcal{F}_{\Omega}^{B}$ with probability at least $1-\delta$:
\begin{align}
   \frac{1}{C}D_{C^0}\left(f^*,f\right) \leq D_{h^k}\left(f^*,f\right)+r(|\Omega|,\xi,B,\Tilde{B}, c,I,\delta),
\end{align}
where $C$ is a constant and $r(|\Omega|,\xi,B,\Tilde{B}, c,I,\delta)\to 0$ as $I\to\infty$.
\end{thm}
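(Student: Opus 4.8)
The plan is to reduce the statement to Theorem~\ref{thm:sobolev_loss_Hk} by composing it with the classical Sobolev embedding $H^k\hookrightarrow C^0$, which is available precisely because the hypothesis $k>N/2$ is strict (at the borderline $k=N/2$ the embedding into $C^0$ fails). Since $[0,2\pi]^N$ is a bounded Lipschitz domain --- and, more conveniently, since the functions in $\mathcal{F}_\Omega^B$ and $\mathcal{H}_\Omega^{\tilde B}$ are trigonometric polynomials that extend periodically to the torus $\mathbb{T}^N$ --- the Sobolev embedding theorem furnishes a constant $C=C(N,k)>0$ with
\begin{align}
\norm{g}_{C^0}\leq C\,\norm{g}_{H^k}\qquad\text{for every }g\in H^k\!\left([0,2\pi]^N\right).
\end{align}
In the periodic picture this constant is transparent: writing $g=\sum_{\bm\omega}c_{\bm\omega}e^{-i\bm\omega\bm x}$, Cauchy--Schwarz gives $\norm{g}_{C^0}\leq\sum_{\bm\omega}\abs{c_{\bm\omega}}\leq\big(\sum_{\bm\omega}(1+\abs{\bm\omega}^2)^{-k}\big)^{1/2}\big(\sum_{\bm\omega}(1+\abs{\bm\omega}^2)^{k}\abs{c_{\bm\omega}}^2\big)^{1/2}$, and the first factor is finite exactly when $k>N/2$.

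Next I would apply this with $g=f^*-f$. The difference lies in $H^k([0,2\pi]^N)$ because $f^*\in\mathcal{F}\subseteq H^k([0,2\pi]^N)$ by hypothesis and every $f\in\mathcal{F}_\Omega^B$ is a finite trigonometric polynomial, hence smooth. Since here $\mathcal{X}=[0,2\pi]^N$, so that the supremum defining $D_{C^0}$ ranges over the same cube on which the embedding is stated, we obtain
\begin{align}
\frac{1}{C}\,D_{C^0}(f^*,f)=\frac{1}{C}\norm{f^*-f}_{C^0}\leq\norm{f^*-f}_{H^k}=D_{H^k}(f^*,f).
\end{align}

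Finally I would invoke Theorem~\ref{thm:sobolev_loss_Hk}, whose hypotheses are all implied by those assumed here (the extra assumption $\norm{f}_\infty\leq B$ only helps): for any $\delta\in(0,1)$, with probability at least $1-\delta$ over the draw of the i.i.d.\ training set of size $I$ containing data of $\xi$ partial derivatives, $D_{H^k}(f^*,f)\leq D_{h^k}(f^*,f)+r(\abs{\Omega},\xi,B,\tilde B,c,I,\delta)$ for all $f\in\mathcal{F}_\Omega^B$, with $r\to0$ as $I\to\infty$. Chaining the last two displays and folding the Sobolev constant into the constant $C$ of the statement gives the claim.

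The main obstacle --- a mild one, given Theorem~\ref{thm:sobolev_loss_Hk} --- is making the first step rigorous up to the boundary of the cube with a constant depending only on $N$ and $k$; this is precisely why it is cleanest to argue in the periodic setting on $\mathbb{T}^N$, where the embedding for $k>N/2$ and the explicit value of its constant are entirely standard. One should also check that the domain over which $D_{C^0}$ takes its supremum coincides with the cube appearing in the embedding: if one only knew that $\mathcal{X}$ is compactly contained in $[0,2\pi]^N$, so much the better, since an interior estimate would then suffice. Everything else is a direct composition of two already-available bounds.
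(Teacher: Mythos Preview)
Your proposal is correct and follows essentially the same two-step strategy as the paper: first bound $D_{C^0}$ by $C\,D_{H^k}$ via the Sobolev embedding for $k>N/2$, then invoke Theorem~\ref{thm:sobolev_loss_Hk} to pass from $D_{H^k}$ to $D_{h^k}+r$. The only cosmetic difference is that the paper cites the Rellich--Kondrachov theorem on the cube (with the cone condition) for the embedding, whereas you argue more directly on the torus via the Cauchy--Schwarz estimate on Fourier coefficients; your route is if anything slightly cleaner, since only continuity of the embedding is needed, not compactness.
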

The proofs of Theorems \ref{thm:sobolev_loss_Hk}, \ref{thm:sobolev_loss_Lp} and \ref{thm:sobolev_loss_C0} can be found in Appendix \ref{appendix:appendix2}.\\
A consequence of Theorem \ref{thm:sobolev_loss_C0} is that, if the order of the derivatives that we have at our disposal are higher than half the number of input dimensions ($k> N/2$), our solution of the $D_{h^k}$ problem is also a solution of the $D_{C^0}$ problem, corresponding to uniform convergence. 
It means that training with the $\ell_{h^k}$ loss function (for $k> N/2$), which sums the $\ell_{l^2}$ losses of function and derivative values, is sufficient for an approximation in $C^0$. This would not be possible by a training with $\ell_{l^2}$ loss function and more practical than the training with the $\ell_{l^{\infty}}$ loss function, as described above.\\
Note that we face a curse of dimensionality-like phenomenon as the dimension of the input grows.
In this case, the number of terms that go into the  $\ell_{h^k}$ loss function grows exponentially with $k$, as we have to take into account mixed derivatives.
Hence, for high dimensional problems the demand on data of partial derivatives is higher and only if they are available, this generalization bound holds.\\
Further, the requirement of quantum resources for evaluating $D_{h^1}(f^*,f)$ is higher than for the evaluation of $ D_{l^2}(f^*,f)$. If we use the parameter shift rule for the evaluation of the derivatives, we need to evaluate $I(1+2N)$ different PQCs. Similar to the demand on training data, this number of PQCs to evaluate $D_{h^k}(f^*,f)$ grows exponentially in $k$. However, even if the amount of training data is the same (and implying an increase of required PQC evaluations up to a factor of $2$), the training with the $\ell_{h^k}$ loss function shows the promised advantages, as presented in \cite{huge2020differential}.\\
The last property we wish to highlight is the fact that the generalization bounds connect the empirical risk with the full risk, but they do not give us information of whether they can both tend to zero. 
In order to tackle that question we need to combine the results of the theorems present in this section with the ones present in Section \ref{sec:data_normalization}. 
For example, if we try to fit a one-dimensional function which is not periodic on $[0,2\pi]$, using model families $\mathcal{F}_{\Omega}^{B}$ and $\mathcal{H}_{\Omega}^{\tilde{B}}$ and the $\ell_{h^1}$ loss function, as we increase the number of sample points both sides of the inequality will tend to the same constant but they will not converge to zero. 
In this regard, observe that the fact that the empirical risk goes to zero is a sufficient but not a necessary condition for the target distance to also tend to zero. Following the same example, if instead of training the model using the $\ell_{h^1}$ loss function we trained the model using the $\ell_{l^2}$ loss function, then the $L^2$ distance will vanish.
This idea is illustrated in Figure  \ref{fig:f_DML}.
The bottom line is that more information in the training data does not always equate to a better approximation, if we are not very careful with the necessary data normalization.\\

\section{Numerical experiments}\label{sec:numerical_experiments}

In this section we illustrate the theoretical discussion of Sections \ref{sec:data_normalization} and \ref{sec:differential_machine_learning} with an illustrative example: the approximation of function $f^* = \frac{x}{2\pi}, \, x\in [-\pi,\pi]$ by the PQC in Figure \ref{fig:PQC_arquitecture}:
\begin{figure*}
    \centering
    \resizebox{\textwidth}{!}{
    \begin{quantikz}
    \lstick{$\ket{0}$} 
    & \gate{R_x(\omega_1 x)}\QLayer{2}{4}{Layer 1} & \gate{R_y(\theta_{11})} & \ctrl{1} &\targ{}
    & \gate{R_x(\omega_1 x)}\QLayer{2}{4}{Layer 2} & \gate{R_y(\theta_{12})} & \ctrl{1} &\targ{}
    & \gate{R_x(\omega_1 x)}\QLayer{2}{4}{Layer 3} & \gate{R_y(\theta_{13})} & \ctrl{1} &\targ{}
    & \meter{}\\
    \lstick{$\ket{0}$}
    & \gate{R_x(\omega_1 x)}                       & \gate{R_y(\theta_{21})} & \targ{}  &\ctrl{-1}
    & \gate{R_x(\omega_1 x)}                       & \gate{R_y(\theta_{22})} & \targ{}  &\ctrl{-1}
    & \gate{R_x(\omega_1 x)}                       & \gate{R_y(\theta_{23})} & \targ{}  &\ctrl{-1}
    & \meter{}
    \end{quantikz}
    }
    \caption{Architecture $U\left(x,\bm{\theta}\right)$ used in the experiments. The parameters $\theta_{ij}$ are variational parameters. Each qubit is measured in the Pauli-$Z$ basis.}
    \label{fig:PQC_arquitecture}
\end{figure*}
We conduct two different numerical experiments and show them in Figures \ref{fig:f_pqc} and \ref{fig:f_DML}. 
We chose a linear function to show that even in this simple case, the numerical tests fail utterly if the results of Sections \ref{sec:data_normalization} and \ref{sec:differential_machine_learning} are not applied.\\
All simulations have been performed using $10$ points  ($10$ for the labels plus $10$ for the derivative values when they are present) uniformly distributed along the domain for the training phase.
Each experiment has been repeated $100$ times and we depict the $25$, $50$ and $75$ percentiles in colored solid lines in Figure \ref{fig:f_pqc}.
The legends call the result of the PQCs as $f_{\bullet}(\cdot)$, where the subscript denotes under which loss function we have done the training and in the parentheses we indicate which normalization we have chosen.\\
In Figure \ref{fig:f_pqc} we compare the performance of our PQC under different normalizations.
We normalize the data to lie in the domains $\left[-\frac{\pi}{2},\frac{\pi}{2}\right]$, $[-\pi,\pi]$ and $[-2\pi,2\pi]$, respectively. 
When we normalized our data to lie in the range $[-\frac{\pi}{2},\frac{\pi}{2}]$ we get the best results,  as we expected due to Theorem \ref{thm:C_0_approximation}.\\
In contrast, when the data is normalized to lie in the range $[-2\pi,2\pi]$ we obtain very poor approximation results, because in this case, it is not possible to approximate with the $C^0$-distance or the $L^2$-distance.
The intermediate regime happens when we normalize the data to lie in the range $[-\pi,\pi]$, here we obtain a reasonable approximation except for the boundaries. 
This is a consequence of approximating with the $L^2$-distance instead of the $C^0$-distance: we cannot guarantee that the error will be reduced on any given point. 
This behavior remains even when we increase the size of the circuit and the number of given points.\\
\begin{figure*}
    \centering
\begin{tikzpicture}
    \begin{axis}[
      width=0.32\textwidth,
      legend pos=north west,
      xmin = -pi,
      xmax = pi,
      ymax = 0.6,
      ymin = -0.6,
      xlabel={$x$},
      ylabel={$f^*(x) \text{ and } f_{l^2}(x)$}
    ]
    \addplot[color=black]table [x=x,y=y]{data/f_spvsd_half.dat};
    \addplot[name path=mean,color=green]table [x=x,y=y_pred]{data/f_spvsd_half.dat};
    \addplot[name path=upper,color=green!70]table [x=x,y=y_pred_upper]{data/f_spvsd_half.dat};
    \addplot[name path=lower,color=green!70]table [x=x,y=y_pred_lower]{data/f_spvsd_half.dat};
    \addplot[green!50,fill opacity=0.5] fill between[of=lower and upper];

    \legend{$f^*$,$f_{l^2}\left(\left[-\frac{\pi}{2}\comma \frac{\pi}{2}\right]\right)$}
    \end{axis}
\end{tikzpicture}
\begin{tikzpicture}
    \begin{axis}[
      width=0.32\textwidth,
      legend pos=north west,
      xmin = -pi,
      xmax = pi,
      ymax = 0.6,
      ymin = -0.6,
      xlabel={$x$},
      ylabel={$f^*(x) \text{ and } f_{l^2}(x)$}
    ]
    \addplot[color=black]table [x=x,y=y]{data/f_spvsd_full.dat};
    \addplot[name path=mean,color=yellow]table [x=x,y=y_pred]{data/f_spvsd_full.dat};
    \addplot[name path=upper,color=yellow!70]table [x=x,y=y_pred_upper]{data/f_spvsd_full.dat};
    \addplot[name path=lower,color=yellow!70]table [x=x,y=y_pred_lower]{data/f_spvsd_full.dat};
    \addplot[yellow!50,fill opacity=0.5] fill between[of=lower and upper];

    \legend{$f^*$,$f_{l^2}\left(\left[-\pi\comma \pi\right]\right)$}
    \end{axis}
\end{tikzpicture}
\begin{tikzpicture}
    \begin{axis}[
      width=0.32\textwidth,
      legend pos=north west,
      xmin = -pi,
      xmax = pi,
      ymax = 0.6,
      ymin = -0.6,
      xlabel={$x$},
      ylabel={$f^*(x) \text{ and } f_{l^2}(x)$}
    ]
    \addplot[color=black]table [x=x,y=y]{data/f_spvsd_double.dat};
    \addplot[name path=mean,color=red]table [x=x,y=y_pred]{data/f_spvsd_double.dat};
    \addplot[name path=upper,color=red!70]table [x=x,y=y_pred_upper]{data/f_spvsd_double.dat};
    \addplot[name path=lower,color=red!70]table [x=x,y=y_pred_lower]{data/f_spvsd_double.dat};
    \addplot[red!50,fill opacity=0.5] fill between[of=lower and upper];

    \legend{$f^*$,$f_{l^2}\left(\left[-2\pi\comma 2\pi\right]\right)$}
    \end{axis}
\end{tikzpicture}
\caption{In this picture we have trained the PQC of Figure \ref{fig:PQC_arquitecture} to approximate the function $f^* = \frac{x}{2\pi}$. We have used $10$ training points, the $\ell_{l^2}$ loss function and $100$ epochs with the Adam optimizer.
The experiments have been repeated $100$ times.
In the left panel we have normalized the data to lie in the interval $\left[-\frac{\pi}{2},\frac{\pi}{2}\right]$.
In the central panel we have normalized the data to lie in the interval $\left[-\pi,\pi\right]$.
In the right panel we have normalized the data to lie in the interval $\left[-2\pi,2\pi\right]$.}
\label{fig:f_pqc}
\end{figure*}
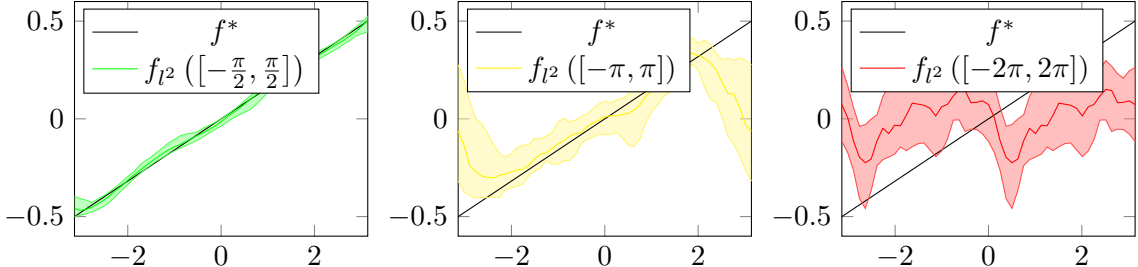

In Figure \ref{fig:f_DML}, we study the impact of the different loss functions with different normalizations in the learning problem.
We simulated the regression using two different loss functions, $\ell_{h^1}$ and $\ell_{l^2}$ under two different normalizations, with the domains $\left[-\frac{\pi}{2},\frac{\pi}{2}\right]$ and $\left[-\pi,\pi\right]$. 
The first noticeable phenomenon that we can see is that using the $h^1$ norm instead of the $l^2$ norm when the data is normalized to lie in the interval $\left[-\frac{\pi}{2},\frac{\pi}{2}\right]$ not only reduces the variance stemming from repeating the experiments 100 times, but also has some impact on the bias.
What might be more surprising is the effect of the $h^1$ norm when the data is normalized to lie in the interval $\left[-\pi,\pi\right]$.
Instead of getting a better approximation w.r.t. the $l^2$ we worsen it.
We explain it with the fact that, when we normalize the data to lie in the interval $\left[-\pi,\pi\right]$, our PQC is not an approximator of $H^1$ but it is an approximator of $L^2$, i.e., it can approximate the function but it cannot simultaneously approximate the function and the derivatives.
Thus, in the minimization process the PQC tries to find a balance between the error in the function and the error in the derivatives, worsening the results with respect to the quality of the function approximation.\\

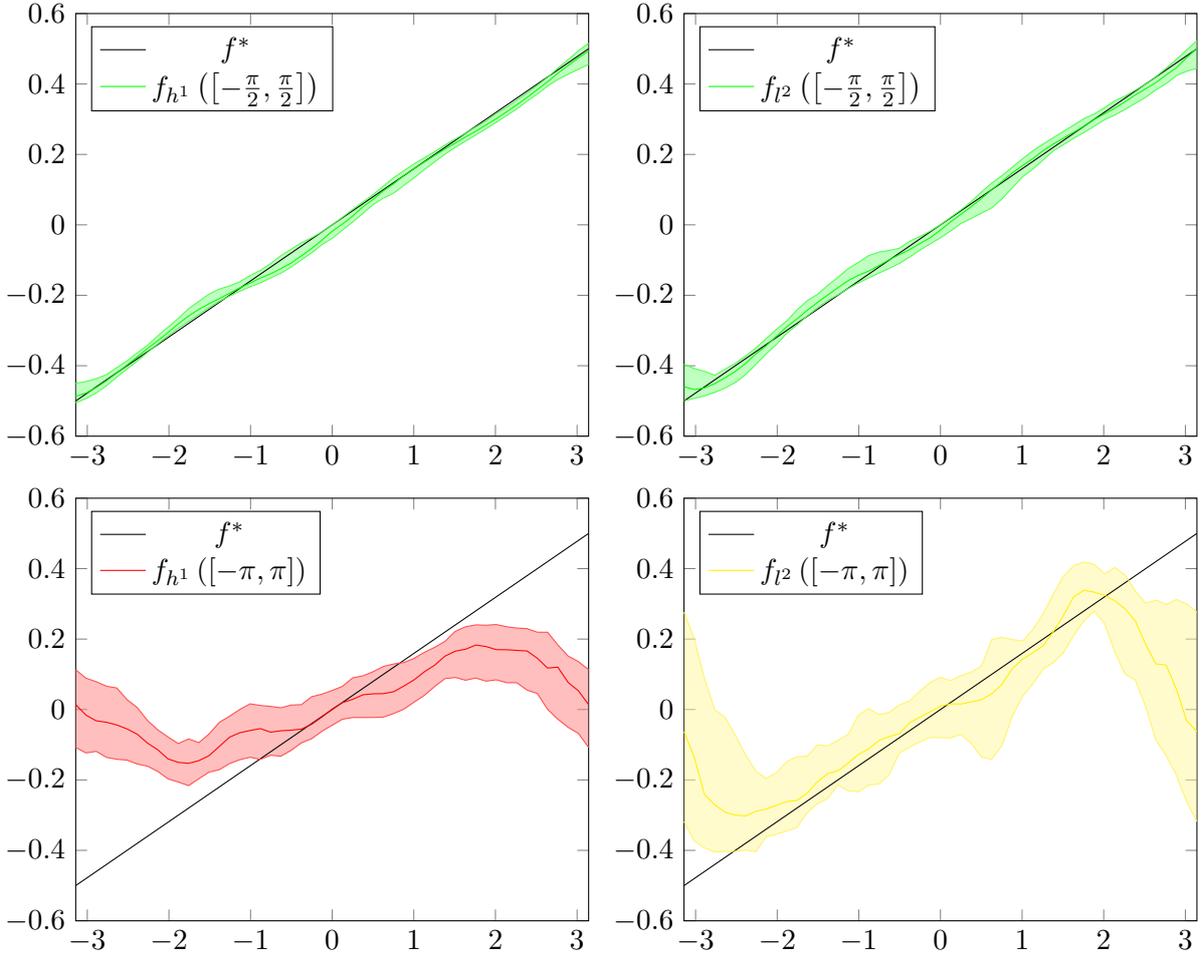
\begin{figure*}
    \centering
    \begin{tikzpicture}
    \begin{axis}[
      width=0.45 \textwidth,
      legend pos=north west,
      xmin = -pi,
      xmax = pi,
      ymax = 0.6,
      ymin = -0.6,
      xlabel={$x$},
      ylabel={$f^*(x) \text{ and } f_{h^1}(x)$}
    ]
    \addplot[color=black]table [x=x,y=y]{data/f_spvsd_half.dat};
    \addplot[name path=mean,color=green]table [x=x,y=y_pred]{data/f_gradient_half.dat};
    \addplot[name path=upper,color=green!70]table [x=x,y=y_pred_upper]{data/f_gradient_half.dat};
    \addplot[name path=lower,color=green!70]table [x=x,y=y_pred_lower]{data/f_gradient_half.dat};
    \addplot[green!50,fill opacity=0.5] fill between[of=lower and upper];

    \legend{$f^*$,$f_{h^1}\left(\left[-\frac{\pi}{2}\comma \frac{\pi}{2}\right]\right)$}
    \end{axis}
\end{tikzpicture}
\begin{tikzpicture}
    \begin{axis}[
      width=0.45\textwidth,
      legend pos=north west,
      xmin = -pi,
      xmax = pi,
      ymax = 0.6,
      ymin = -0.6,
      xlabel={$x$},
      ylabel={$f^*(x) \text{ and } f_{l^2}(x)$}
    ]
    \addplot[color=black]table [x=x,y=y]{data/f_spvsd_half.dat};
    \addplot[name path=mean,color=green]table [x=x,y=y_pred]{data/f_spvsd_half.dat};
    \addplot[name path=upper,color=green!70]table [x=x,y=y_pred_upper]{data/f_spvsd_half.dat};
    \addplot[name path=lower,color=green!70]table [x=x,y=y_pred_lower]{data/f_spvsd_half.dat};
    \addplot[green!50,fill opacity=0.5] fill between[of=lower and upper];

    \legend{$f^*$,$f_{l^2}\left(\left[-\frac{\pi}{2}\comma \frac{\pi}{2}\right]\right)$}
    \end{axis}
\end{tikzpicture}
\begin{tikzpicture}
    \begin{axis}[
      width=0.45\textwidth,
      legend pos=north west,
      xmin = -pi,
      xmax = pi,
      ymax = 0.6,
      ymin = -0.6,
      xlabel={$x$},
      ylabel={$f^*(x) \text{ and } f_{h^1}(x)$}
    ]
    \addplot[color=black]table [x=x,y=y]{data/f_gradient_full.dat};
    \addplot[name path=mean,color=red]table [x=x,y=y_pred]{data/f_gradient_full.dat};
    \addplot[name path=upper,color=red!70]table [x=x,y=y_pred_upper]{data/f_gradient_full.dat};
    \addplot[name path=lower,color=red!70]table [x=x,y=y_pred_lower]{data/f_gradient_full.dat};
    \addplot[red!50,fill opacity=0.5] fill between[of=lower and upper];

    \legend{$f^*$,$f_{h^1}\left(\left[-\pi\comma \pi\right]\right)$}
    \end{axis}
\end{tikzpicture}
\begin{tikzpicture}
    \begin{axis}[
      width=0.45\textwidth,
      legend pos=north west,
      xmin = -pi,
      xmax = pi,
      ymax = 0.6,
      ymin = -0.6,
      xlabel={$x$},
      ylabel={$f^*(x) \text{ and } f_{l^2}(x)$}
    ]
    \addplot[color=black]table [x=x,y=y]{data/f_spvsd_full.dat};
    \addplot[name path=mean,color=yellow]table [x=x,y=y_pred]{data/f_spvsd_full.dat};
    \addplot[name path=upper,color=yellow!70]table [x=x,y=y_pred_upper]{data/f_spvsd_full.dat};
    \addplot[name path=lower,color=yellow!70]table [x=x,y=y_pred_lower]{data/f_spvsd_full.dat};
    \addplot[yellow!50,fill opacity=0.5] fill between[of=lower and upper];

    \legend{$f^*$,$f_{l^2}\left(\left[-\pi\comma \pi\right]\right)$}
    \end{axis}
\end{tikzpicture}
\caption{In this picture we have trained the PQC of Figure \ref{fig:PQC_arquitecture} to approximate the function $f^* = \frac{x}{2\pi}$, using the two different loss functions $\ell_{h^1}$ and $\ell_{l^2}$.
We have used $10$ training points ($10$ for the labels plus $10$ for the derivative values when they are present) and $100$ epochs with the Adam optimizer.
The experiments have been repeated $100$ times.
In the upper panel, we have normalized the data to lie in the interval $\left[-\frac{\pi}{2},\frac{\pi}{2}\right]$.
In the lower panel we have normalized the data to lie in the interval $\left[-\pi,\pi\right]$.}
\label{fig:f_DML}
\end{figure*}

\section{Conclusions}\label{sec:conclusions}
In this work, we have developed a broader theory of approximation capacities of PQCs. We have shown how an appropriate choice of the data normalization greatly improves the expressivity of the PQCs.
More specifically, we showed that a min-max feature scaling that normalizes the input data along each dimension to lie in the range $[-\frac{\pi}{2},\frac{\pi}{2}]$ makes PQCs universal approximators in the $L^p$ space with $1\leq p< \infty$, the continuous function space and the $H^k$ space.\\
Moreover, since with this normalization we are able to approximate functions in the sense of the $L^p$, the $C^0$ and the $H^k$ distance, we discussed that a loss function which is consistent with those distances in training the models might be more appropriate than other choices.
In particular, the natural choice for the $C^0$ would be the $l^\infty$ distance.
However, since the $l^\infty$ distance is not differentiable, which makes the optimization of PQCs harder, we leveraged Sobolev inequalities to show that the $h^1$ distance is consistent with the $C^0$ distance in $\mathbb{R}$ while being differentiable. We showed further, that the  the $h^k$ distances are consistent with the $L^p$ and the $H^k$ distances.\\
Lastly, we performed some numerical experiments to illustrate how this simple choice of normalization and loss function can vastly improve the results in practice.\\ 

The data normalization technique can be seen as a complementary result to the work of \cite{schuld2021effect}.
Nevertheless, there is still much work to do in this direction.
For example, if instead of only taking a min-max feature scaling, we can combine it with a mapping of the form $\tilde{x} = \arcsin(x)$ to end up with a series that closely resembles Chebyshev polynomials, which are better suited for certain problems.
In analogy with neural networks, the data encoding strategy is playing a similar role to that of the activation functions.\\ 

The relation between the $\ell_{h^k}$ loss functions and the $L^p$ generalization bounds can be seen as a complementary result to differential machine learning \cite{huge2020differential} and to generalization bounds for PQCs as derived in \cite{Caro2021encodingdependent}.
This is the first work that gives some insight on why differential machine learning leads to better generalization results. 
From the relations that we derived, one would expect this technique to fail as we increase the input dimension.
However, in practice it has demonstrated very good results, as shown in \cite{huge2020differential}, where a 7-dimensional Basket option was trained using the $\ell_{h^1}$ loss function.
An interesting line of research would be to study the threshold at which differential machine learning starts to fail.\\
Since a natural application are physical systems governed by differential equations
where data on the derivatives of a target function are available, another open question remains regarding how our approach compares to standard differential equation solvers in these scenarios.

\section{Acknowledgments}
The authors would like to thank Adrián Pérez Salinas for useful feedback on an earlier version of this paper and Hao Wang and Carlos Vázquez for helpful discussions.\\
JT, VD and DD acknowledge the support received by the Dutch National Growth Fund (NGF), as part of the Quantum Delta NL programme.\\
JT acknowledges the support received from the European Union’s Horizon Europe research and innovation programme through the ERC StG FINE-TEA-SQUAD (Grant No. 101040729).'\\
VD and AM acknowledge the support by the project
NEASQC funded from the European Union’s Horizon
2020 research and innovation programme (grant agree-
ment No 951821).\\
VD acknowledges by the Dutch Research Council
(NWO/OCW), as part of the Quantum Software Consortium programme (project number 024.003.037).\\
AM acknowledges the support received from the Centro de Investigación de Galicia ``CITIC", funded by Xunta de Galicia and the European Union (European Regional Development Fund- Galicia 2014-2020 Program), by grant ED431G 2019/01.\\
The views and opinions expressed here are solely those of the authors and do not necessarily reflect those of the funding institutions. Neither of the funding institution can be held responsible for them.

\bibliographystyle{plainnat}
\bibliography{mainbib}

\onecolumn
\appendix

\section{Proof of Theorems \ref{thm:L_p_approximation}, \ref{thm:C_0_approximation} and \ref{thm:H^k_approximation}}\label{appendix:fourier_series_uniform_approximation}
For proving Theorems \ref{thm:L_p_approximation}, \ref{thm:C_0_approximation} and \ref{thm:H^k_approximation}, we need two preliminary results. 
Firstly, we need to show that a quantum circuit can realize the $\ell^1$-Fejér's mean of $C^0\left(\mathbb{T}^N\right)$ and $L^p\left(\mathbb{T}^N\right), \, \forall \, 1\leq p< \infty$ functions.
Secondly, we need to prove that we can define periodic extensions of functions belonging to $C^0\left(U\right)$ and $H^k\left(U\right), \, \forall \, 1\leq k< \infty$,  where $U$ is compactly contained in $\mathbb{T}^N$
to functions belonging to $C^0\left(\mathbb{T}^N\right)$ and $H^k\left(\mathbb{T}^N\right), \, \forall \, 1\leq k< \infty$ respectively . 
The combination of both results plus Fejér's theorem in multiple dimensions naturally yields Theorems \ref{thm:L_p_approximation} and \ref{thm:C_0_approximation}. Theorem \ref{thm:H^k_approximation} can be proven by a standard approximation Theorem of the Fourier series.

\subsection{Féjer's mean}\label{appendix:Fejer_mean}
We call the function
\begin{align}\label{eq:fejersmean}
    \sigma_{NK}(f_{m'})=\sum_{\mathbf{j}\in\mathbb{Z}_K^N}\left(1-\dfrac{\norm{\mathbf{j}}_1}{NK}\right)\hat{f}_{\mathbf{j}}e^{i\mathbf{x}\cdot\mathbf{j}}\ ,
\end{align} 
where $\hat{f}_{\mathbf{j}}$ is the $\mathbf{j}$-th Fourier coefficient of $f_{m'}$, the $\ell^1$-Fejér's mean of $f_{m'}$.\\

 We will show that our PQC can realize the Fejér's mean of any well-defined function.
In Appendix C of \cite{schuld2021effect}, the authors showed that the quantum model
family $f_{m'}$ can be written as a generalized trigonometric series of the form
\begin{align}
\label{eq:quantummodelasfourierappendix_Fejersmean}
    f_{m'}(\mathbf{x})=\sum_{\mathbf{j}\in\mathbb{Z}_K^N}c_{\mathbf{j}}e^{i\mathbf{x}\cdot\mathbf{j}}\ ,
\end{align}
where $\mathbb{Z}_K^N=\{-K,-K+1,...,0,...,K-1,K\}^N$ is contained in the Cartesian product of the frequency spectrum associated with $H_m$, as defined in Definition \ref{definition: universal hamiltonian family} and that the coefficients $c_{\mathbf{j}}$ are completely determined by the observable  freely up to the complex-conjugation symmetry that guarantees that the model output is a real-valued function. Note that we can choose the coefficients $c_{\mathbf{j}}$ as:
\begin{align}
    c_{\mathbf{j}}=\left(1-\dfrac{\norm{\mathbf{j}}_1}{NK}\right)\hat{f}_{\mathbf{j}},
\end{align}
which are the coefficients of the $\ell^1$-Fejér's mean in Equation \eqref{eq:fejersmean}.

\subsection{\texorpdfstring{Periodic extension for $C^0$ functions}{Periodic extension for functions}}\label{appendix:periodic_extension}
By the Tietze extension theorem \cite{royden1968real}, there exists a function $g_1\in C^0(\mathbb{R}^N)$ with $g_1|_{U}=f^*$.
Then, we define a function $g_2\in C^0(\mathbb{R}^N)$ with $g_2|_{\overline{U}}=1$ and $g_2|_{\mathbb{R}^N\backslash V}=0$, where $V$ is defined as $\overline{U}\subset V\subset (0,2\pi)^N$. This set $V$ exists since $U$ is compactly contained in $[0,2\pi]^N$.\\ 
We can explicitly construct the function $g_2$ in the following way:
Let $\delta>0$, such that the closure $\overline{\omega_{2\delta}}$ of the $2\delta$-neighborhood of $\omega$, is contained in $[0,2\pi]^N$, which is possible due to $U$ being compactly contained in $[0,2\pi]^N$. 
We define $V:=\omega_{2\delta}$ and a function $\psi_{\delta}\in C^0(\mathbb{R}^N)$, supported on the $\delta$ Ball in $\mathbb{R}^N$ centered around $0$ and normalized as $\int_{\mathbb{R}^N}\psi_{\delta}(x)dx=1$. Then, we define $g_2$ as the convolution of $\mathds{1}_{U_\delta}$ and $\psi_{\delta}$:
\begin{align}
    g_2(x)=\int_{\mathbb{R}^N}\mathds{1}_{U_\delta}(\tau)\psi_{\delta}(\tau-x)d\tau\ .
\end{align}
With this construction, $g_2$ satisfies the asked properties.
We define the extension $f_{ext}$ as the product $g_1g_2$, which yields a function $f^*_{ext}$ with 
\begin{align}
    f^*_{ext}|_{U}&=f^*\ ,\\
    f^*_{ext}|_{\mathbb{R}^N\backslash
V}&=0\ ,\text{ hence}\\
f^*_{ext}(x)&=f^*_{ext}(y)\quad\forall x,y\in\partial \mathbb{T}^N\ .
\end{align}
The such defined extension $f^*_{ext}$ is thus an element of $C^0([0,2\pi]^N)$ with periodic boundary conditions, so we can map it onto the $N$-dimensional torus $\mathbb{T}^N$. 

\subsection{\texorpdfstring{Periodic extension for $H^k$ functions}{Periodic extension for functions}}\label{appendix:periodic_extension_Hk}
By the extension theorems for Sobolev functions \cite[Theorem 2.2, Part 2]{burenkov1999extension}, there exists a function $g_1\in H^{k}(\mathbb{R}^N)$ with $g_1|_{U}=f^*$.
Then, we define a function $g_2\in H^{k}(\mathbb{R}^N)$ with $g_2|_{\overline{U}}=1$ and $g_2|_{\mathbb{R}^N\backslash V}=0$, where $V$ is defined as $\overline{U}\subset V\subset (0,2\pi)^N$. This set $V$ exists since $U$ is compactly contained in $[0,2\pi]^N$.\\ 
We can explicitly construct the function $g_2$ in the following way:
Let $\delta>0$, such that the closure $\overline{\omega_{2\delta}}$ of the $2\delta$-neighborhood of $\omega$, is contained in $[0,2\pi]^N$, which is possible due to $U$ being compactly contained in $[0,2\pi]^N$. 
We define $V:=\omega_{2\delta}$ and a function $\psi_{\delta}\in H^{k}(\mathbb{R}^N)$, supported on the $\delta$ Ball in $\mathbb{R}^N$ centered around $0$ and normalized as $\int_{\mathbb{R}^N}\psi_{\delta}(x)dx=1$. Then, we define $g_2$ as the convolution of $\mathds{1}_{U_\delta}$ and $\psi_{\delta}$:
\begin{align}
    g_2(x)=\int_{\mathbb{R}^N}\mathds{1}_{U_\delta}(\tau)\psi_{\delta}(\tau-x)d\tau\ .
\end{align}
With this construction, $g_2$ satisfies the asked properties.
We define the extension $f_{ext}$ as the product $g_1g_2$, which yields a function $f^*_{ext}$ with 
\begin{align}
    f^*_{ext}|_{U}&=f^*\ ,\\
    f^*_{ext}|_{\mathbb{R}^N\backslash
V}&=0\ ,\text{ hence}\\
f^*_{ext}(x)&=f^*_{ext}(y)\quad\forall x,y\in\partial \mathbb{T}^N\ .
\end{align}
The such defined extension $f^*_{ext}$ is thus an element of $H^{k}([0,2\pi]^N)$ with periodic boundary conditions, so we can map it onto the $N$-dimensional torus $\mathbb{T}^N$.

\subsection{Proof of Theorems \ref{thm:L_p_approximation}, \ref{thm:C_0_approximation} and \ref{thm:H^k_approximation}}
The final step leverages the power of Fejèr's theorem in multiple dimensions:
\begin{thm}\cite{WEISZ201199}[Theorem 2]\label{thm:fejer_L_p}
    For all functions $f^*\in L^p\left(\mathbb{T}^N\right)$ with $1\leq p <\infty$, and for all $\epsilon >0$, there exists some $t \in \mathbb{N}$, such that
\begin{equation}
   \norm{\sigma_{t}\left(f\right)-f^*}_{L^p}<\epsilon.
\end{equation}
\end{thm}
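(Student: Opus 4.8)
The plan is to realize $\sigma_t$ as convolution against a summability kernel, isolate the single nontrivial analytic ingredient — a uniform $L^1$ bound on that kernel — and then conclude by the standard density argument. Work with normalized Haar measure on $\mathbb{T}^N$, so that Young's inequality reads $\norm{g*h}_{L^p}\le\norm{g}_{L^p}\norm{h}_{L^1}$ and the constant $1$ has unit mass. Put $w^{(t)}_{\mathbf j}:=1-\norm{\mathbf j}_1/(Nt)$ for $\mathbf j\in\mathbb{Z}_t^N$ and $w^{(t)}_{\mathbf j}:=0$ otherwise, and let $K_t(\mathbf x):=\sum_{\mathbf j\in\mathbb{Z}^N}w^{(t)}_{\mathbf j}e^{i\mathbf j\cdot\mathbf x}$ be the $\ell^1$-Fejér (triangular Cesàro) kernel. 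Since Fourier coefficients multiply under convolution, $\sigma_t(f^*)=f^**K_t$. Here $w^{(t)}_{\mathbf 0}=1$, so $K_t$ has unit mean, and $0\le w^{(t)}_{\mathbf j}\le1$ with $w^{(t)}_{\mathbf j}\to1$ as $t\to\infty$ for each fixed $\mathbf j$. For $p=2$ this already settles it, since $\norm{\sigma_t(f^*)-f^*}_{L^2}^2=\sum_{\mathbf j}(1-w^{(t)}_{\mathbf j})^2\abs{\widehat{f^*}_{\mathbf j}}^2\to0$ by dominated convergence; for general $1\le p<\infty$ one needs one further fact.

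That fact is
\begin{equation}
 C:=\sup_{t\ge1}\norm{K_t}_{L^1(\mathbb{T}^N)}<\infty .
\end{equation}
For $N=1$ it is immediate: the Fejér kernel is nonnegative with closed form $\frac{1}{t+1}\bigl(\frac{\sin((t+1)x/2)}{\sin(x/2)}\bigr)^2$, so $\norm{K_t}_{L^1}$ equals its mean, namely $1$. For $N\ge2$ the triangular weights $1-\norm{\mathbf j}_1/(Nt)$ do not factorize over coordinates, so $K_t$ is neither a tensor product of one-dimensional kernels nor (in general) nonnegative, and no elementary closed form is available; \emph{this is the main obstacle.} The standard route, carried out in \cite{WEISZ201199}, reduces the $N$-dimensional estimate to the one-dimensional one by expressing the triangular Cesàro weight $\bigl(1-\norm{\mathbf j}_1/(Nt)\bigr)_+$ as an average of one-dimensional Cesàro weights over the standard simplex — a divided-difference representation — so that $K_t$ appears as a one-parameter superposition of products of one-dimensional Fejér-type kernels whose $L^1$ norms are uniformly controlled, yielding a bound $C=C(N)$ independent of $t$. (Alternatively one can prove the concentration estimate $\int_{\delta\le\norm{\mathbf x}_\infty\le\pi}\abs{K_t}\to0$ for each $\delta>0$ and run an approximate-identity computation, but the argument below needs only the uniform $L^1$ bound.)

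Granting this bound, fix $\epsilon>0$. Trigonometric polynomials are dense in $L^p(\mathbb{T}^N)$ for $1\le p<\infty$, so choose a polynomial $P$ with $\norm{f^*-P}_{L^p}<\epsilon/\bigl(2(C+1)\bigr)$. Young's inequality and the kernel bound give $\norm{\sigma_t(g)}_{L^p}=\norm{g*K_t}_{L^p}\le C\norm{g}_{L^p}$ for every $g$ and every $t$. Once $t$ is large enough that $\mathbb{Z}_t^N$ contains the finite frequency support of $P$, $\sigma_t(P)=\sum_{\mathbf j}w^{(t)}_{\mathbf j}\widehat{P}_{\mathbf j}e^{i\mathbf j\cdot\mathbf x}$ is a fixed finite sum whose coefficients converge to those of $P$, hence $\sigma_t(P)\to P$ uniformly and so in $L^p$; choose $t$ with $\norm{\sigma_t(P)-P}_{L^p}<\epsilon/2$. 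Then
\begin{align}
 \norm{\sigma_t(f^*)-f^*}_{L^p}
 &\le\norm{\sigma_t(f^*-P)}_{L^p}+\norm{\sigma_t(P)-P}_{L^p}+\norm{P-f^*}_{L^p}\nonumber\\
 &\le(C+1)\norm{f^*-P}_{L^p}+\tfrac{\epsilon}{2}<\epsilon,
\end{align}
which is the assertion.
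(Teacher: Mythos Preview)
The paper does not supply its own proof of this statement; it is quoted verbatim from \cite{WEISZ201199} as a black-box input to the proof of Theorem~\ref{thm:L_p_approximation}. Your write-up therefore goes further than the paper: you give the standard summability-kernel reduction (convolution representation, uniform $L^1$ bound, density of trigonometric polynomials plus Young's inequality), and you correctly isolate that for $N\ge 2$ the uniform bound $\sup_t\norm{K_t}_{L^1(\mathbb{T}^N)}<\infty$ for the $\ell^1$-Fej\'er kernel is the only nontrivial ingredient, deferring it to the same cited source. So both you and the paper ultimately rely on \cite{WEISZ201199} for the crucial multidimensional kernel estimate; your version simply makes explicit which analytic fact is being imported and shows that, once it is granted, the remainder of the argument is routine. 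The proof is correct as written.
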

Combining Theorem \ref{thm:fejer_L_p} with the fact that quantum circuits can recover any $\ell^1$-Fejér's mean as shown in Appendix \ref{appendix:Fejer_mean} directly implies Theorem \ref{thm:L_p_approximation}.\\ \\
Similarly, for continuous functions we have another version of Fejér's theorem for continuous functions:
\begin{thm}\cite{WEISZ201199}[Theorem 2]\label{thm:fejer_C_0}
    For all functions $f^*\in C^0\left(\mathbb{T}^N\right)$, and for all $\epsilon >0$, there exists some $t \in \mathbb{N}$, such that
\begin{equation}
   \norm{\sigma_{t}\left(f\right)-f^*}_{\infty}<\epsilon.
\end{equation}
\end{thm}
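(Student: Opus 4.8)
The plan is to prove this by the classical approximate-identity (summability kernel) argument that underlies every Fejér-type theorem. First I would rewrite the $\ell^1$-Fejér mean as a convolution on the torus: since the $\mathbf{j}$-th Fourier coefficient of $\sigma_t(f^*)$ equals $\left(1-\norm{\mathbf{j}}_1/(Nt)\right)$ times that of $f^*$, one has $\sigma_t(f^*)(\mathbf{x}) = (2\pi)^{-N}\int_{\mathbb{T}^N} f^*(\mathbf{x}-\mathbf{u})\,\mathcal{K}_t(\mathbf{u})\,d\mathbf{u}$, where $\mathcal{K}_t(\mathbf{u}) = \sum_{\mathbf{j}\in\mathbb{Z}_t^N}\left(1-\norm{\mathbf{j}}_1/(Nt)\right)e^{i\mathbf{j}\cdot\mathbf{u}}$ is the $\ell^1$-weighted Fejér kernel.

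Second, I would check that $\{\mathcal{K}_t\}_{t\in\mathbb{N}}$ is a summability kernel on $\mathbb{T}^N$: (i) $(2\pi)^{-N}\int_{\mathbb{T}^N}\mathcal{K}_t = 1$, which is immediate because integration kills every term except $\mathbf{j}=\mathbf{0}$, whose weight is $1$; (ii) a uniform $L^1$ bound $(2\pi)^{-N}\int_{\mathbb{T}^N}|\mathcal{K}_t| \le C$ with $C$ independent of $t$; and (iii) concentration at the origin, $\int_{\eta\le|\mathbf{u}|\le\pi}|\mathcal{K}_t(\mathbf{u})|\,d\mathbf{u}\to 0$ as $t\to\infty$ for each fixed $\eta>0$.

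Third, once (i)--(iii) are in hand the conclusion is routine. Since $\mathbb{T}^N$ is compact, $f^*\in C^0(\mathbb{T}^N)$ is uniformly continuous, so for $\epsilon>0$ pick $\eta>0$ with $|f^*(\mathbf{x}-\mathbf{u})-f^*(\mathbf{x})|<\epsilon$ whenever $|\mathbf{u}|<\eta$. Using (i) to insert $-f^*(\mathbf{x})$ inside the integral, split $\sigma_t(f^*)(\mathbf{x})-f^*(\mathbf{x})$ at $|\mathbf{u}|=\eta$: the near part is at most $C\epsilon$ by (ii), the far part at most $2\norm{f^*}_{\infty}(2\pi)^{-N}\int_{|\mathbf{u}|\ge\eta}|\mathcal{K}_t|$, which tends to $0$ by (iii). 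Both bounds are uniform in $\mathbf{x}$, so $\norm{\sigma_t(f^*)-f^*}_{\infty}\to 0$ and any sufficiently large $t$ works. The same convolution estimate, with uniform continuity replaced by continuity of translation in $L^p$ and an application of Minkowski's integral inequality, gives Theorem \ref{thm:fejer_L_p}.

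The main obstacle is step two, establishing (ii) and (iii) for the $\ell^1$-weighted kernel when $N\ge 2$. In dimension one the Fejér kernel equals $\frac{1}{t}\left(\frac{\sin(tu/2)}{\sin(u/2)}\right)^2\ge 0$, so (ii) holds with $C=1$ and (iii) follows from the $\sin(u/2)^{-2}$ bound on $\{|u|\ge\eta\}$. But the $\ell^1$ weights $1-(|j_1|+\dots+|j_N|)/(Nt)$ do not factor as a product of one-dimensional triangular weights, so $\mathcal{K}_t$ is neither a tensor product of one-dimensional Fejér kernels nor manifestly non-negative; one must instead either find a representation of $\mathcal{K}_t$ as a positive superposition of dilated one-dimensional Fejér kernels integrated over a simplex of scale parameters, or derive a direct pointwise kernel estimate sharp enough to yield both the uniform $L^1$ control and the tail bound. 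This is precisely the technical content carried out in \cite{WEISZ201199}, so in the body of the paper one simply invokes that reference; the sketch above records the route behind it.
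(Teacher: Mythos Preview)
The paper does not prove this theorem at all; it is quoted verbatim as an external result from \cite{WEISZ201199} and then simply invoked. Your proposal goes beyond the paper by correctly outlining the summability-kernel argument that underlies the cited result: rewrite $\sigma_t(f^*)$ as a convolution with the $\ell^1$-weighted Fej\'er kernel $\mathcal{K}_t$, verify the three approximate-identity properties, and combine with uniform continuity of $f^*$ on the compact torus. Your identification of the genuine obstacle---that for $N\ge 2$ the weights $1-\norm{\mathbf{j}}_1/(Nt)$ do not tensorize, so non-negativity and the uniform $L^1$ bound require the kernel analysis carried out in \cite{WEISZ201199}---is accurate, and your final remark that one ultimately invokes that reference matches exactly what the paper does.
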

Combining Theorem \ref{thm:fejer_C_0} with the fact that quantum circuits can recover any $\ell^1$-Fejér's mean as shown in Appendix \ref{appendix:Fejer_mean} and the fact that we can extend any function in $C^0\left(U\right)^N, \, \forall \, 1\leq p< \infty$ where $U$ is compactly contained in $\mathbb{T}^N$ to a function in $C^0 \left(\mathbb{T}^N\right), \, \forall \, 1\leq p< \infty$ as shown in Appendix \ref{appendix:periodic_extension} directly implies Theorem \ref{thm:C_0_approximation}.\\

We finally prove Theorem \ref{thm:H^k_approximation}, which uses the setup in \cite{schuld2021effect} as described in section \ref{sec:data_normalization}:
    We note firstly that the quantum model family $f_{m'}$ generates a truncated Fourier series $\tilde{f}$ in the domain $[0,2\pi]^N$ of the form
\begin{align}
\label{eq:quantummodelasfourierappendix_H^k}
    \tilde{f}(\mathbf{x})=\sum_{\mathbf{j}\in\mathbb{Z}_K^N}c_{\mathbf{j}}e^{i\mathbf{x}\cdot\mathbf{j}}\ ,
\end{align}
where $\mathbb{Z}_K^N=\{-K,-K+1,...,0,...,K-1,K\}^N$ is contained in the Cartesian product of the frequency spectrum associated with $H_m$, as defined in Definition \ref{definition: universal hamiltonian family}.
The proof of that is written in Appendix C of \cite{schuld2021effect}.\\
Secondly, we can extend the function $f^*$ defined on $U$ to a periodic function $f^*_{ext}$ on $[0,2\pi]^N$ via the construction shown in Appendix \ref{appendix:periodic_extension_Hk}.
As written in Theorem 1.1 in \cite{canuto1982approximation}, the Fourier series of $f^*_{ext}$, which we can write in the form of equation \ref{eq:quantummodelasfourierappendix_H^k}, converges in the $H^k$-distance to $f^*_{ext}$. 
As $f^*_{ext}(x)=f^*(x)$ for all $x\in U$, the Fourier series of $f^*_{ext}$ converges in the $H^k$-distance  to $f^*$ on $U$. This implies Theorem \ref{thm:H^k_approximation}.

\section{Proof of Theorems \ref{thm:sobolev_loss_Hk}, \ref{thm:sobolev_loss_Lp} and \ref{thm:sobolev_loss_C0}}\label{appendix:appendix2}
In this appendix, we prove Theorems \ref{thm:sobolev_loss_Hk}, \ref{thm:sobolev_loss_Lp} and \ref{thm:sobolev_loss_C0}, for which we need several preliminary definitions and results:
\begin{definition}[$L$-Lipschitz loss function]
    Let $(\mathcal{Y},d_\mathcal{Y})$ be a metric space with metric $d_\mathcal{Y}$ and let $\ell:\mathcal{Y}\times \mathcal{Y}\to \mathbb{R}$ be a loss function. We call it $L-$ Lipschitz with regard to a fixed $y\in \mathcal{Y}$, if there exists a constant $L\geq0$, such that for all $z_1,z_2\in \mathbb{R}$,
    \begin{align}
        d_\mathcal{Y}\left(\ell(y,z_1),\ell(y,z_2)\right)\leq L\left|z_1-z_2\right|\ .
    \end{align}
\end{definition}

\begin{thm}[Generalization bound for general trigonometric series] \cite{Caro2021encodingdependent}[Theorem 11]
\label{thm:Generalization bound for general trigonometric series}
Let $N,I\in \mathbb{N}$. Let $B>0$ and $\Tilde{B}>0$ be such that $\mathcal{F}_{\Omega}^{B}\subseteq \mathcal{H}_{\Omega}^{\Tilde{B}}$, for the function families $\mathcal{F}_{\Omega}^{B}$ and $\mathcal{H}_{\Omega}^{\Tilde{B}}$ as defined in Definition \ref{def:functionfamilies}.
Let $\ell:\mathbb{R}\times\mathbb{R}\to [0,c]$ be a bounded loss function such that $\mathbb{R}\ni z\mapsto \ell(y,z)$ is $L$- Lipschitz for all $y\in\mathbb{R}$. 
For any $\delta\in(0,1)$ and for any probability measure $P$ on $[0,2\pi]^N\times\mathbb{R}$, with probability at least $1-\delta$ over the choice of i.i.d. training data $S\in([0,2\pi]^N\times\mathbb{R})^I$ of size $I$, for every $f\in\mathcal{F}_{\Omega}^{B}$, the generalization error can be upper-bounded as
\begin{align}
    &\int_{[0,2\pi]^N\times \mathbb{R}}\ell(f^*(\bm{x}),f(\bm{x}))dP(\bm(x),f^*(\bm{x}))-\frac{1}{|I|}\sum_{\bm{x}_i,f(\bm{x}_i)\in S}\ell(f^*(\bm{x}_i),f(\bm{x}_i))\\
    &\leq\mathcal{O}\left(BL\sqrt{\frac{|\Omega|(\log(|\Omega|)+\log(\Tilde{B}))}{I}}+c\sqrt{\frac{\log(1/\delta)}{I}}\right)\ ,
\end{align}
for a target function $f^*:[0,2\pi]^N\to \mathbb{R}$\ .
\end{thm}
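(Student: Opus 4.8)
The statement is imported as Theorem~11 of \cite{Caro2021encodingdependent}, so strictly speaking the plan is to cite it; for completeness I sketch the standard statistical-learning-theory argument one would run to establish it. The backbone is uniform convergence of empirical means via Rademacher complexity. First I would apply the usual two-sided symmetrization bound (see \cite{mohri2018foundations}): since $\ell$ takes values in $[0,c]$, McDiarmid's inequality together with symmetrization gives, with probability at least $1-\delta$ over the draw of $S$, simultaneously for all $f\in\mathcal{F}_{\Omega}^{B}$,
\[
\mathbb{E}_P[\ell(f^*,f)]-\frac{1}{I}\sum_{i=1}^{I}\ell(f^*(\bm{x}_i),f(\bm{x}_i))\le 2\,\widehat{\mathfrak{R}}_S(\ell\circ\mathcal{F}_{\Omega}^{B})+3c\sqrt{\tfrac{\log(2/\delta)}{2I}},
\]
where $\ell\circ\mathcal{F}_{\Omega}^{B}=\{\bm{x}\mapsto \ell(f^*(\bm{x}),f(\bm{x})):f\in\mathcal{F}_{\Omega}^{B}\}$ and $\widehat{\mathfrak{R}}_S$ denotes the empirical Rademacher complexity.

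Second, because $z\mapsto\ell(y,z)$ is $L$-Lipschitz uniformly in $y$ and the labels $y_i=f^*(\bm{x}_i)$ are fixed given $S$, the Ledoux--Talagrand contraction inequality lets me peel the loss off: $\widehat{\mathfrak{R}}_S(\ell\circ\mathcal{F}_{\Omega}^{B})\le L\,\widehat{\mathfrak{R}}_S(\mathcal{F}_{\Omega}^{B})$. It then remains to bound the Rademacher complexity of the model class itself.

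Third --- the crux --- I would estimate $\widehat{\mathfrak{R}}_S(\mathcal{F}_{\Omega}^{B})$. Via the inclusion $\mathcal{F}_{\Omega}^{B}\subseteq\mathcal{H}_{\Omega}^{\Tilde{B}}$, every model is a real trigonometric polynomial $\tfrac{a_0}{2}+\sum_{\bm{\omega}\in\Omega_+}(a_{\bm{\omega}}\cos(\bm{\omega}\bm{x})+b_{\bm{\omega}}\sin(\bm{\omega}\bm{x}))$ whose coefficient vector lies in a Euclidean ball of radius $\Tilde{B}$ in $\mathbb{R}^{2|\Omega_+|+1}$, tested against the feature map $\bm{x}\mapsto(\tfrac12,\{\cos(\bm{\omega}\bm{x}),\sin(\bm{\omega}\bm{x})\}_{\bm{\omega}\in\Omega_+})$ whose entries are bounded by $1$. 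A naive linear-class bound would give $\Tilde{B}\sqrt{|\Omega|/I}$, which is too weak; to recover the logarithmic dependence on $|\Omega|$ and $\Tilde{B}$ together with the prefactor $B=\sup_{f}\|f\|_\infty$ rather than $\Tilde{B}$, I would instead bound the empirical covering numbers, $\log\mathcal{N}(\varepsilon,\mathcal{F}_{\Omega}^{B},L^2(S))\lesssim|\Omega|\log(\Tilde{B}/\varepsilon)$ (e.g.\ by a volumetric estimate on the coefficient ball, or a Maurey-type sparsification), and then run Dudley's entropy integral truncated at scale $\sim B$ since $\|f\|_\infty\le B$. This yields $\widehat{\mathfrak{R}}_S(\mathcal{F}_{\Omega}^{B})\lesssim B\sqrt{|\Omega|(\log|\Omega|+\log\Tilde{B})/I}$. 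Combining the three steps and absorbing constants gives the claimed $\mathcal{O}\!\big(BL\sqrt{|\Omega|(\log|\Omega|+\log\Tilde{B})/I}+c\sqrt{\log(1/\delta)/I}\big)$ bound.

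The main obstacle is the third step: extracting only a logarithmic price for the ambient dimension $2|\Omega_+|+1$ and, in particular, replacing the coefficient-norm bound $\Tilde{B}$ by the (potentially much smaller) uniform bound $B$ in the leading term. This is exactly what requires the covering-number/chaining refinement rather than an off-the-shelf linear-class estimate, and it is the technical heart of Theorem~11 of \cite{Caro2021encodingdependent}; the symmetrization and contraction steps are the routine boilerplate of agnostic PAC learning.
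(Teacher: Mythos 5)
The paper itself does not prove this statement---it is imported verbatim as Theorem~11 of \cite{Caro2021encodingdependent}---and your proposal correctly treats it as a citation. Your completeness sketch (symmetrization plus McDiarmid, Talagrand contraction to peel off the $L$-Lipschitz loss, then a covering-number/chaining estimate for the Rademacher complexity of the bounded trigonometric class) follows the same structure as the cited proof, which the paper's Appendix B itself points to (Talagrand's lemma and the standard Rademacher generalization bound of \cite{wolf2023}), so this is essentially the same approach.
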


This theorem is written for loss functions that take two real values as an input, which is the case for most loss functions. We show that the theorem holds as well for the loss function $\ell_{h^k}$:

\begin{lemma}
\label{lemma:Generalizationboundalsovalidforsobolev}
    Theorem \ref{thm:Generalization bound for general trigonometric series} holds as well for the loss function $ \ell_{h^k}:\mathbb{R}^{\binom{N}{k}+1}\times\mathbb{R}^{\binom{N}{k}+1}\to [0,c]$ with $N,k\in\mathbb{N}$ by choosing the frequency set $\Omega$ and the bounds $B$ and $\Tilde{B}$ large enough, such that both the functions $f$ of a considered function family $\mathcal{F}$ and their derivatives $D^{\alpha}f$ for $\abs{\alpha}\leq k$ are contained in the families $\mathcal{F}_{\Omega}^{B}\subseteq\mathcal{H}_{\Omega}^{\Tilde{B}}$. 
\end{lemma}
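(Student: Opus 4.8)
The plan is to reduce the statement to $M(N,k)+1$ applications of Theorem~\ref{thm:Generalization bound for general trigonometric series} for the ordinary squared loss --- one for the function values and one for each partial derivative --- and then recombine these by a union bound. First I would decompose the loss: by definition
\begin{align}
    \ell_{h^k}(f(\bm{x}),f^*(\bm{x}))=\left(f^*(\bm{x})-f(\bm{x})\right)^2+\sum_{\abs{\alpha}\leq k}\left(D^{\alpha}f^*(\bm{x})-D^{\alpha}f(\bm{x})\right)^2,
\end{align}
a sum of $M(N,k)+1$ terms, each of the form $\ell_{\mathrm{sq}}(u,v)=(u-v)^2$ applied either to $(f^*,f)$ or to $(D^{\alpha}f^*,D^{\alpha}f)$. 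The structural fact that makes this work, already noted after Definition~\ref{def:functionfamilies}, is that if $f=\sum_{\bm{\omega}\in\Omega}c_{\bm{\omega}}e^{-i\bm{\omega}\bm{x}}$ then $D^{\alpha}f=\sum_{\bm{\omega}\in\Omega}c_{\bm{\omega}}(-i\bm{\omega})^{\alpha}e^{-i\bm{\omega}\bm{x}}$ is again a trigonometric polynomial on the \emph{same} finite frequency set $\Omega$, with supremum norm inflated by at most $\max_{\bm{\omega}\in\Omega}\norm{\bm{\omega}}_{\infty}^{k}$. Hence, after enlarging $\Omega$, $B$ and $\tilde B$ as allowed by the statement, all the maps $\bm{x}\mapsto D^{\alpha}f(\bm{x})$ with $\abs{\alpha}\leq k$ lie in one common family $\mathcal{F}_{\Omega}^{B}\subseteq\mathcal{H}_{\Omega}^{\tilde B}$, while the $D^{\alpha}f^*$ play the role of the target functions in the corresponding applications.

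Next I would check the hypotheses of Theorem~\ref{thm:Generalization bound for general trigonometric series} term by term. Each $\ell_{\mathrm{sq}}$-term is nonnegative and, since $\ell_{h^k}\leq c$ by assumption, bounded by $c$ on all relevant pairs; in particular $\abs{f^*-f}\leq\sqrt{c}$ and $\abs{D^{\alpha}f^*-D^{\alpha}f}\leq\sqrt{c}$ pointwise, which together with $\norm{f}_{\infty}\leq B$ confines every value attained to a fixed bounded interval. Restricted to that interval, $z\mapsto(u-z)^2$ is Lipschitz with a constant $L$ of order $\sqrt{c}$ (equivalently of order $B$), uniformly in the target value $u$; so each term is a bounded, Lipschitz, real-input loss of exactly the type covered by Theorem~\ref{thm:Generalization bound for general trigonometric series}. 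Finally, projecting the training data $S$ onto its $(\bm{x}_i,f^*(\bm{x}_i))$-coordinates, respectively onto its $(\bm{x}_i,D^{\alpha}f^*(\bm{x}_i))$-coordinates, yields i.i.d.\ samples of size $I$ from the corresponding pushforward of the data distribution, so the theorem applies unchanged to each of the $M(N,k)+1$ terms (and, if only $\xi\leq M(N,k)$ derivatives are present in $S$, to the $\xi+1$ terms that actually occur).

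The last step is the recombination. I would apply Theorem~\ref{thm:Generalization bound for general trigonometric series} to each term with confidence parameter $\delta/(M(N,k)+1)$, take a union bound so that all the inequalities hold simultaneously with probability at least $1-\delta$, and add them; since $\ell_{h^k}$ is exactly the sum of its squared-loss terms, both its expected and its empirical risk are the corresponding sums, so
\begin{align}
    &\int\ell_{h^k}(f^*(\bm{x}),f(\bm{x}))\,dP-\frac{1}{I}\sum_{i}\ell_{h^k}(f^*(\bm{x}_i),f(\bm{x}_i))\nonumber\\
    &\qquad\leq (M(N,k)+1)\,\mathcal{O}\!\left(BL\sqrt{\frac{\abs{\Omega}(\log\abs{\Omega}+\log\tilde B)}{I}}+c\sqrt{\frac{\log(1/\delta)}{I}}\right).
\end{align}
As $M(N,k)+1$ depends only on $N$ and $k$, the right-hand side is again a bound of the form $r(\abs{\Omega},\xi,B,\tilde B,c,I,\delta)$ with $r\to 0$ as $I\to\infty$, which is the claim. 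I expect the one genuine source of friction to be the middle step: controlling, uniformly over $\abs{\alpha}\leq k$, both the value ranges and the Lipschitz constants of the derivative terms --- this is precisely why the frequency set and the norm bounds must be enlarged, the inflation factor $\max_{\bm{\omega}\in\Omega}\norm{\bm{\omega}}_{\infty}^{k}$ being harmless for a fixed $\Omega$ but not removable.
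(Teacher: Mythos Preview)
Your approach is correct but takes a genuinely different route from the paper. The paper does not invoke Theorem~\ref{thm:Generalization bound for general trigonometric series} as a black box; instead it reopens that proof and works directly at the level of the empirical Rademacher complexity. Concretely, the paper writes $\ell_{h^k}=\sum_{|\alpha|\leq k}\ell_{l^2}(D^{\alpha}f,D^{\alpha}f^*)$ inside the Rademacher complexity of the loss class, relaxes the supremum over $f$ to independent suprema over the $D^{\alpha}f$ (using precisely your observation that all of them lie in the enlarged $\mathcal{F}_{\Omega}^{B}$), applies Talagrand's lemma term by term, and arrives at $\hat{\mathcal{R}}_S(\mathcal{G})\leq \xi\,\hat{\mathcal{R}}_{S|_x}(\mathcal{F}_{\Omega}^{B})$; the rest of the argument from \cite{Caro2021encodingdependent} is then reused unchanged. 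Your proposal, by contrast, keeps the theorem sealed and pays with a union bound over the $M(N,k)+1$ terms. This is more modular --- no need to revisit the Rademacher machinery --- at the price of an extra $\log(M(N,k)+1)$ factor in the confidence term, which is harmless for fixed $N,k$. You are, incidentally, more careful than the paper about the Lipschitz constant of the squared loss on a bounded range. Either way the resulting bound has the same asymptotic form $r(|\Omega|,\xi,B,\tilde B,c,I,\delta)\to 0$ as $I\to\infty$.
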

\begin{proof}

 The proof goes analogous to the proof of Theorem 11 in \cite{Caro2021encodingdependent}. There are two points which require special care: \\
    Firstly, we need to adapt the application of Talagrand's lemma which is used to upper bound the Rademacher complexity.
    Let us use the $\ell_{l^2}$ loss function
    \begin{align}
        \ell_{l^2}(f^*(\bm{x}),f(\bm{x}))=\left(f^*(\bm{x})-f(\bm{x})\right)^2\ ,
    \end{align}
    which is related to the loss function $\ell_{h^k}$ by 
    \begin{align}
         \ell_{h^k}(f^*(\bm{x}),f(\bm{x}))=\sum_{\abs{\alpha}\leq k}\ell_{l^2}\left(D^{\alpha}f^*(\bm{x}),D^{\alpha}f(\bm{x})\right)\ .
    \end{align}
    By using the reverse triangle inequality, we can prove the lipschitzness of the loss function $\ell_{l^2}$, for a fixed $f^*(\bm{x})\in L^2([0,2\pi]^N)$:
\begin{align}
    \Bigg|\ell_{l^2}(f_1(\bm{x}),f^*(\bm{x}))-\ell_{l^2}(f_2(\bm{x}),f^*(\bm{x}))\Bigg|&= \Bigg|\abs{f^*(\bm{x})-f_1(\bm{x}))}^2-\abs{f^*(\bm{x})-f_2(\bm{x}))}^2\Bigg|\\
    &\leq \Bigg|\abs{f^*(\bm{x})-f_1(\bm{x})-\left(f^*(\bm{x})-f_2(\bm{x})\right)}^2\Bigg|\\
    &= \Bigg|\abs{\left(f^*(\bm{x})-f^*(\bm{x})\right)-\left(f_1(\bm{x}))-f_2(\bm{x}))\right)}^2\Bigg|\\
    &=\abs{\left(f_1(\bm{x}))-f_2(\bm{x}))\right)}^2\ .
\end{align}
    Thus, the loss function $\ell_{l^2}$ is $L$-Lipschitz with the Lipschitz constant $L=1$. Note that this is the Lipschitz constant of the loss function $\ell_{l^2}$, which is not related to the Lipschitz constant of functions of the function space $L^2([0,2\pi]^N)$.\\
    Parallel to the proof of Theorem 11 in \cite{Caro2021encodingdependent}, we now define the set
    \begin{align}
        \mathcal{G}=\left\{[0,2\pi]^N\times [0,2\pi]^N\ni(\bm{x},\bm{x})\mapsto \ell_{h^k}(f^*(\bm{x}),f(\bm{x}))\Big|f^*\in H^k([0,2\pi]^N)\text{ and }f\in\mathcal{F}^B_\Omega\right\}
    \end{align}
    We can now upper bound the Rademacher complexity $\hat{\mathcal{R}}_S(\mathcal{G})$ for a training set $S$ with $I$ data points and a target function $f^*$ as
    \begin{align}
        \hat{\mathcal{R}}_S(\mathcal{G})&=\frac{1}{I}\mathbb{E}_{\sigma}\left[\sup_{f\in\mathcal{F}^B_\Omega }\sum_{i=1}^{I}\sigma_i\ell_{h^k}(f(\bm{x}_i),f^*(\bm{x}_i))\right]\\
        &=\frac{1}{I}\mathbb{E}_{\sigma}\left[\sup_{f\in\mathcal{F}^B_\Omega }\sum_{i=1}^{I}\sigma_i\sum_{\abs{\alpha}\leq k}\ell_{l^2}(D^{\alpha}f(\bm{x}_i),D^{\alpha}f^*(\bm{x}_i))\right]\\
        &\leq \frac{1}{I}\mathbb{E}_{\sigma}\left[\sup_{D^{\alpha}f(\bm{x})\in\mathcal{F}^B_\Omega, \abs{\alpha}\leq k}\sum_{i=1}^{I}\sigma_i\sum_{\abs{\alpha}\leq k}\ell_{l^2}(D^{\alpha}f(\bm{x}_i),D^{\alpha}f^*(\bm{x}_i))\right]\\
        &= \sum_{\abs{\alpha}\leq k}\frac{1}{I}\mathbb{E}_{\sigma}\left[\sup_{D^{\alpha}f(\bm{x})\in\mathcal{F}^B_\Omega}\sum_{i=1}^{I}\sigma_i\ell_{l^2}(D^{\alpha}f(\bm{x}_i),D^{\alpha}f^*(\bm{x}_i))\right]\\
        &\leq \xi \sup_{\abs{\alpha}\leq k} \frac{1}{I}\mathbb{E}_{\sigma}\left[ \sup_{D^{\alpha}f(\bm{x})\in\mathcal{F}^B_\Omega}\sum_{i=1}^{I}\sigma_i\ell_{l^2}(D^{\alpha}f(\bm{x}_i),D^{\alpha}f^*(\bm{x}_i))\right]\ .
    \end{align}
    The i.i.d. random variables $\sigma_i\in\{-1,1\}$ are the Rademacher random variables and $\xi$ is the number of derivatives $D^{\alpha}$ with $\abs{\alpha}\leq k$.
    Here, we first used the relation between the loss functions $\ell_{l^2}$ and $\ell_{h^k}$. 
    Then, we used the fact that the supremum over functions and derivatives $D^{\alpha}f(\bm{x})\in\mathcal{F}^B_\Omega, \abs{\alpha}\leq k$ which are independent from each other is larger than the supremum which is only taken over the functions $f\in\mathcal{F}^B_\Omega$, in which case the derivatives that are taken account in the loss functions have to be the derivatives of these functions. 
    In the last inequality, we used that each of the $\xi$ terms in the sum $\sum_{\abs{\alpha}\leq k}$ can be upper bounded by its supremum.\\
    We can now apply Talagrand's lemma on the quantity $\frac{1}{I}\mathbb{E}_{\sigma}\left[ \sup_{D^{\alpha}f(\bm{x})\in\mathcal{F}^B_\Omega}\sum_{i=1}^{I}\sigma_i\ell_{l^2}(D^{\alpha}f(\bm{x}_i),D^{\alpha}f^*(\bm{x}_i))\right]$, for a fixed $\abs{\alpha}\leq k$ in which way we obtain the upper bound
    \begin{align}
        \hat{\mathcal{R}}_S(\mathcal{G})\leq \xi \hat{\mathcal{R}}_{S|_x}(\mathcal{F}^B_\Omega)\ ,
    \end{align}
    where we used that the loss function $\ell_{l^2}$ has the Lipschitz constant $L=1$ and where $S|_x:=\{\bm{x}_i\}_{i=0}^{I}$ is the set of the unlabeled training data points.
    The supremum $\sup_{\abs{\alpha}\leq k}$ can be omitted on the right hand side of the bound, since the subset $S|_x$ of the training set does not include the labels $D^{\alpha}f^*(\bm{x}_i)$ and since we assumed the function family $\mathcal{F}^B_\Omega$ to contain the relevant derivatives as well.
    This upper bound corresponds to equation (97) in the proof of Theorem 11 in \cite{Caro2021encodingdependent}, apart from the additional factor $\xi$.\\
    Secondly, in the last step of the proof in \cite{Caro2021encodingdependent}, the authors use standard generalization bounds as stated in Theorem 1.15 in \cite{wolf2023}.
    The formulation of this standard generalization bound theorem allows for the loss function $\ell_{h^k}$ as well.
\end{proof}

\begin{definition}[Compact embedding] \cite{adams2003sobolev}[Definition 1.25]
\label{def:compact_embedding}
Let $X$ and $Y$ be normed spaces with the norms $\norm{\cdot}_X$ and $\norm{\cdot}_Y$, respectively, and $X$ a subspace of $Y$. Let $I:X\to Y$, $Ix=x$ for all $x\in X$ be the embedding operator from $X$ to $Y$.  We say that $X$ is continuously embedded in $Y$, and write $X\to Y$, if there exists a constant $C$, such that
\begin{align}
    \norm{Ix}_Y\leq C\norm{x}_X, \forall x\in X\ .
\end{align}
We call the embedding compact, if $X$ is continuously embedded in $V$ and the embedding operator $I$ is compact.
\end{definition}

\begin{definition}
    We write $C_B^0(U)$ for the space of bounded, continuous functions on $U$.
\end{definition}

\begin{definition}[Finite cone and Cone condition]\cite{adams2003sobolev}[Definitions 4.4 and 4.6]
Let $v,x\in\mathbb{R}^N$ be nonzero vectors, let $\angle(x,v)$ be the angle between vectors $x$ and $v$. For given such $v$, a $\rho>0$ and a $\kappa$ such that $0< \kappa\leq \pi$, the set
\begin{align}
C_{v,\rho,\kappa}=\{x\in\mathbb{R}^N:x=0\text{ or }0<\|x\|\leq\rho,\angle(x,v)\leq\kappa/2\}
\end{align}
is called a finite cone of height $\rho$, axis direction $v$ and aperture angle $\kappa$ with vertex at the origin.\\
We say that $U\subseteq\mathbb{R}^N$ satisfies the cone condition, if there exists a finite cone $C$ such that every $x\in U$ is the vertex of a finite cone $C_x$ contained in $U$ and congruent to $C$.
\end{definition}
\begin{thm}[Rellich-Kondrachov]\cite{adams2003sobolev}[Theorem 6.3, Part I and II]
\label{thm:Rellich-Kondrachov}
    Let $U$ be a domain in $\mathbb{R}^N$ satisfying the cone condition, let $U_0$ be a bounded subdomain of $U$, and let $U_0^N$ be the intersection of $U_0$ with a $N$-dimensional plane in $\mathbb{R}^N$. Let $k\geq 1$ be integers.
    Let one of the following cases hold:
    \begin{enumerate}
        \item $2k< N$ and  $1\leq p< 2N/(N-2k)$
        \item $2k=N$ and $1\leq p<\infty$\
        \item $2k> N$ and $1\leq p<\infty$
    \end{enumerate}
    Then, the following embeddings are compact:
\begin{align}
    H^{k}(U)\to L^{p}(U_0^N)\ .
\end{align}
Additionally, in case 3, the following embedding is compact:
\begin{align}
    H^{k}(U)\to C_B^{0}(U_0^N)\ .
\end{align}

\end{thm}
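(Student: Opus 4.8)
The plan is to establish, in each of the three regimes, that every sequence bounded in $H^k(U)$ admits a subsequence converging in the target space, which is the operational meaning of compactness of the embedding operator $I$ in Definition \ref{def:compact_embedding}. The starting point in all cases is the corresponding \emph{continuous} Sobolev embedding, which is available precisely because $U$ satisfies the cone condition: in case~1 one has $H^k(U)\to L^{2^*}(U)$ with $2^*=2N/(N-2k)$, in case~2 one has $H^k(U)\to L^q(U)$ for every finite $q$, and in case~3 Morrey's inequality gives $H^k(U)\to C^{0,\gamma}(U)$ for some Hölder exponent $\gamma>0$. Boundedness of $U_0$ then lets me pass from the critical exponent down to any subcritical $p$ by Hölder's inequality on a finite-measure set, so it suffices to upgrade these continuous embeddings to compact ones.

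For the $L^p$ statement (cases~1 and~2, and the $L^p$ part of case~3) I would use the Fréchet--Kolmogorov precompactness criterion in $L^p(U_0)$: a bounded family is relatively compact once it is uniformly small near the boundary of $U_0$ and equicontinuous under translations, i.e. $\sup_n\norm{\tau_h u_n-u_n}_{L^p}\to 0$ as $\abs{h}\to 0$. To produce the translation estimate I localize $U_0$ by a finite cover of congruent finite cones furnished by the cone condition, on each of which the integral-representation formula (the same device that yields the Sobolev embedding) controls the function and its translates; truncating to compact support in a slightly enlarged neighbourhood reduces matters to functions on $\mathbb{R}^N$. For $H^1$ functions the elementary bound $\norm{\tau_h u-u}_{L^2}\le \abs{h}\,\norm{\nabla u}_{L^2}$ holds and is uniform in $n$ by the $H^k$ bound. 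For $p\ge 2$ I interpolate this $L^2$ estimate (with its $\abs{h}$ decay) against the critical Sobolev bound through $\norm{v}_{L^p}\le\norm{v}_{L^2}^{\theta}\norm{v}_{L^{2^*}}^{1-\theta}$, while for $1\le p\le 2$ the finiteness of $\abs{U_0}$ already carries the $L^2$ estimate to $L^p$; either way $\sup_n\norm{\tau_h u_n-u_n}_{L^p}\le C\abs{h}^{\theta}\to 0$ for the relevant subcritical exponent. The criterion then delivers an $L^p(U_0)$-convergent subsequence.

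For the uniform ($C_B^0$) statement in case~3 I would argue directly through Arzelà--Ascoli rather than Fréchet--Kolmogorov. Since $2k>N$, Morrey's inequality makes a sequence bounded in $H^k(U)$ bounded in $C^{0,\gamma}(U_0^N)$; such a sequence is uniformly bounded and equicontinuous on the bounded set $U_0^N$, so Arzelà--Ascoli extracts a subsequence converging uniformly, i.e. in the norm of $C_B^0(U_0^N)$. This is exactly compactness of $H^k(U)\to C_B^0(U_0^N)$.

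I expect the main obstacle to lie in the translation/localization step rather than in the soft functional-analytic wrapping. Because the cone condition is weaker than Lipschitz boundary regularity, there is no global extension operator to lean on, so the uniform-in-$n$ modulus of translation continuity must be assembled from the cone-wise integral representations while keeping all constants independent of $n$; controlling the behaviour near $\partial U_0$ uniformly is the delicate point. A secondary care point is the choice of interpolation exponent $\theta$ so that the $\abs{h}^\theta$ decay survives up to (but not including) the critical $p=2N/(N-2k)$ in case~1, and the degenerate case~2, where the critical exponent is $\infty$: there I would replace the interpolation by the fact that $H^k(U)$ embeds continuously into every finite $L^q(U)$, feeding that bound into the same Fréchet--Kolmogorov scheme.
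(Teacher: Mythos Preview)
The paper does not prove this theorem: it is quoted from Adams--Fournier (Theorem~6.3) as a classical black box, followed only by a remark translating the notation between the two sources. Your proposal, by contrast, sketches an actual proof of the Rellich--Kondrachov theorem. The strategy you outline---continuous Sobolev embeddings from the cone condition, then Fr\'echet--Kolmogorov for the $L^p$ compactness and Morrey plus Arzel\`a--Ascoli for the $C_B^0$ compactness in case~3---is a standard and correct route, and you have correctly identified the genuinely delicate step (uniform translation continuity near $\partial U_0$ under only the cone condition, where no global extension operator is available). For the purposes of this paper no proof is required here; the result is simply invoked in the proofs of Theorems~\ref{thm:sobolev_loss_Lp} and~\ref{thm:sobolev_loss_C0} to pass from the $H^k$ bound to the $L^p$ and $C^0$ bounds via Definition~\ref{def:compact_embedding}.
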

\begin{remark}
    The theorem relates to the Rellich-Kondrachov Theorem stated in \cite{adams2003sobolev} in the following way: 
    \begin{itemize}
        \item Case 1 and Case 2 are the two cases stated in Part 1 of Theorem 6.3 in \cite{adams2003sobolev}.
        \item Case 3 corresponds to the first and second case of Part 2 in Theorem 6.3 in \cite{adams2003sobolev}.
        \item We use a different notation: The symbols $\Omega,j,p,q,k,n,m$ used in \cite{adams2003sobolev} are here equal to $U,0,2,p,N,N,k$, respectively.
        \item We formulate the theorem for the special cases $W^{k,2}=H^k$ and $W^{0,p}=L^p$ of the Sobolev spaces.
    \end{itemize}
\end{remark}

With these preliminary results, we can prove Theorems \ref{thm:sobolev_loss_Hk}, \ref{thm:sobolev_loss_Lp} and \ref{thm:sobolev_loss_C0}, which we restate here:

\begin{customthm}{5}[Generalization bound for $H^k$]

Let $f^*\in \mathcal{F}\subseteq H^k([0,2\pi]^N)$ be a target function, and let there be a $B>0$ and a $\Tilde{B}>0$, such that $\mathcal{F}_{\Omega}^{B}\subseteq \mathcal{H}_{\Omega}^{\Tilde{B}}$ is a suitable model family.
Let us further assume that $\ell_{h^k}(f_1(\bm{x}),f_2(\bm{x}))\leq c$ for all $\mathbf{x}\in [0,2\pi]^N$, and for all $f_1, f_2 \in\mathcal{F}_{\Omega}^{B}$ or $\mathcal{F}$. 
For any $\delta\in(0,1)$ and the empirical risk $D_{h^k}\left(f^*,f\right)$ trained on an i.i.d. training data $S$ with size $I$ and containing data of $\xi$ partial derivatives, the following holds for all functions $f\in \mathcal{F}_{\Omega}^{B}$ with probability at least $1-\delta$:
\begin{align}
     D_{H^k}\left(f^*,f\right) \leq D_{h^k}\left(f^*,f\right)+r(|\Omega|,\xi,B,\Tilde{B}, c,I,\delta),
\end{align}
where $r(|\Omega|,\xi,B,\Tilde{B}, c,I,\delta)\to 0$ as $I\to\infty$.
\end{customthm}

\begin{proof}
  In the work \cite{Caro2021encodingdependent}, the authors developed generalization bounds for the function family defined in \eqref{eq:family_of_models_dep_on_Hamiltonian_family}. We restated the theorem in Theorem \ref{thm:Generalization bound for general trigonometric series}. As we have shown in Corollary \ref{lemma:Generalizationboundalsovalidforsobolev}, the theorem also holds for the loss function $\ell_{h^k}$.
 
According to the assumption, the function $f^*$ is in $\mathcal{F}_{\Omega}^{B}$. 
The choice of the constant $\Tilde{B}$ such that $\mathcal{F}_{\Omega}^{B}\subseteq \mathcal{H}_{\Omega}^{\Tilde{B}}$  is satisfied depends on the encoding strategy. 
As written in \cite{Caro2021encodingdependent}, it can for example for integer valued frequencies be chosen as $\Tilde{B}=2B$. 
Thus, Lemma \ref{lemma:Generalizationboundalsovalidforsobolev} can be applied and the following bound holds:
    \begin{align}
        D_{H^k}\left(f^*,f_{h^k}\right) \leq D_{h^k}\left(f^*,f_{h^k}\right)+r(|\Omega|,B,\Tilde{B}, c,I,\delta)\ ,
    \end{align}
    with a function $r(|\Omega|,B,\Tilde{B}, c,I,\delta)$ which tends to $0$ as $I\to\infty$.    
\end{proof}

\begin{customthm}{6}[Generalization bound for $L^p$]
Let $f^*\in \mathcal{F}\subseteq H^k([0,2\pi]^N)$ be a target function, and let there be a $B>0$ and a $\Tilde{B}>0$, such that $\mathcal{F}_{\Omega}^{B}\subseteq \mathcal{H}_{\Omega}^{\Tilde{B}}$ is a suitable model family.
Let us further assume that $\ell_{h^k}(f_1(\bm{x}),f_2(\bm{x}))\leq c$ for all $\mathbf{x}\in [0,2\pi]^N$, and for all $f_1, f_2 \in\mathcal{F}_{\Omega}^{B}$ or $\mathcal{F}$. 
Assume that $k,p\in\mathbb{N}$ satisfy one of the two following cases:
\begin{enumerate}
    \item $N\left(\frac{1}{2}-\frac{1}{p}\right)<k<N/2$ and $1\leq p<N$.
    \item $k\geq N/2$  and $1\leq p<\infty$.
\end{enumerate}
For any $\delta\in(0,1)$ and the empirical risk $D_{h^k}\left(f^*,f\right)$ trained on an i.i.d. training data $S$ with size $I$ and containing data of $\xi$ partial derivatives, the following holds for all functions $f\in \mathcal{F}_{\Omega}^{B}$ with probability at least $1-\delta$:
\begin{align}
    \dfrac{1}{C}D_{L^p}\left(f^*,f\right) \leq D_{h^k}\left(f^*,f\right)+r(|\Omega|,\xi,B,\Tilde{B}, c,I,\delta),
\end{align}
where $C$ is a constant and $r(|\Omega|,\xi,B,\Tilde{B}, c,I,\delta)\to 0$ as $I\to\infty$.
\end{customthm}

\begin{proof}
We will prove the theorem by proving the following two inequalities:
\begin{align}
    \dfrac{1}{C}D_{L^p}\left(f^*,f\right) \leq D_{H^k}\left(f^*,f\right) \leq D_{h^k}\left(f^*,f\right)+r(|\Omega|,B,\Tilde{B}, c,I,\delta)\ .
\end{align}
The right hand side inequality is following directly from Theorem \ref{thm:sobolev_loss_Hk}, and the left hand side inequality is a consequence of Theorem \ref{thm:Rellich-Kondrachov}.
Let us look at case $1$ in Theorem \ref{thm:Rellich-Kondrachov}:
We want to rewrite the bound $p< 2N/(N-2k)$ as an upper bound for $k$ for a given $p$. Let us therefore firstly check, which values $p$ is allowed to reach. Due to $k$ being bound from above by $k<N/2$, the upper bound on $p$, $p< 2N/(N-2k)$ 
is maximal for $k=\frac{N}{2}-1$, in which case the upper bound on $p$ becomes $p<N$. That means that values for $p$ chosen in $1\leq p<N$ are valid values. With $p$ such chosen, the bound
$p< 2N/(N-2k)$ is equivalent to bounding $k$ in the following way: 
\begin{align}
N\left(\frac{1}{2}-\frac{1}{p}\right)<k\ .
\end{align}
For case 2 in Theorem \ref{thm:Rellich-Kondrachov}, we have the inequalities
 $k\geq N/2$  and $1\leq p<\infty$.\\
Further, because of the assumptions $\ell_{h^k}\left(f^*(\mathbf{x}),f(\mathbf{x})\right)\leq c$ for all $\mathbf{x}\in [0,2\pi]^N$, the subdomain $U=[0,2\pi]^N$ is equal to $U_0$, and because an $N$-dimensional plane in $\mathbb{R}^N$ is $\mathbb{R}^N$ itself, $U$ is also equal to $U_0^N$. 
Let $C$ be a cone of height at most $\pi$, angle at most $\pi/2$. Then, for each $\bm{x}$ in $U=[0,2\pi]^N$, we can choose an appropriate axis direction such that $C_x$ lies entirely in $U$, so it satisfies the cone condition.
\\To sum up, Theorem \ref{thm:Rellich-Kondrachov} states that for the cases
 \begin{enumerate}
    \item $N\left(\frac{1}{2}-\frac{1}{p}\right)<k<N/2$ and $1\leq p<N$.
    \item $k\geq N/2$  and $1\leq p<\infty$,
\end{enumerate}
the following embeddings are compact:
\begin{align}
    H^{k}([0,2\pi]^N)\to L^{p}([0,2\pi]^N)\ .
\end{align}
According to the definition of a compact embedding (Definition \ref{def:compact_embedding}), 
there exists a constant $C$, such that \begin{align}
    \|f^*-f\|_{L^p}\leq C\|f^*-f\|_{H^k}\ .
    \end{align}
\end{proof}

\begin{customthm}{7}[Generalization bound for $C^0$]
Let $f^*\in \mathcal{F}\subseteq H^k([0,2\pi]^N)$ be a target function, and let there be a $B>0$ and a $\Tilde{B}>0$, such that $\mathcal{F}_{\Omega}^{B}\subseteq \mathcal{H}_{\Omega}^{\Tilde{B}}$ is a suitable model family.
Let us further assume that $ \ell_{h^k}(f_1(\bm{x}),f_2(\bm{x}))\leq c$ for all $\mathbf{x}\in [0,2\pi]^N$, and for all $f_1, f_2 \in\mathcal{F}_{\Omega}^{B}$ or $\mathcal{F}$ and that $\|f\|_{\infty}\leq B$ for all $f \in\mathcal{F}_{\Omega}^{B}$.
Assume, that $k\in\mathbb{N}$ satisfies $k>N/2$.
For any $\delta\in(0,1)$ and the empirical risk $D_{h^k}\left(f^*,f\right)$ trained on an i.i.d. training data $S$ with size $I$ and containing data of $\xi$ partial derivatives, the following holds for all functions $f\in \mathcal{F}_{\Omega}^{B}$ with probability at least $1-\delta$:
\begin{align}
   \frac{1}{C}D_{C^0}\left(f^*,f\right) \leq D_{h^k}\left(f^*,f\right)+r(|\Omega|,\xi,B,\Tilde{B}, c,I,\delta),
\end{align}
where $C$ is a constant and $r(|\Omega|,\xi,B,\Tilde{B}, c,I,\delta)\to 0$ as $I\to\infty$.
\end{customthm}

\begin{proof}
The prove of this theorem is equivalent to the proof of Theorem \ref{thm:sobolev_loss_Lp} above. We will prove this theorem as well by proving the following two inequalities:
\begin{align}
    \dfrac{1}{C}D_{L^p}\left(f^*,f\right) \leq D_{H^k}\left(f^*,f\right) \leq D_{h^k}\left(f^*,f\right)+r(|\mathcal{M}|,I,\delta)\ .
\end{align}
The right hand side inequality is following directly from Theorem \ref{thm:sobolev_loss_Hk}, and the left hand side inequality is a consequence of Theorem \ref{thm:Rellich-Kondrachov}. As written in the proof of Theorem \ref{thm:sobolev_loss_Lp}, the assumptions of Theorem \ref{thm:Rellich-Kondrachov} are satisfied, we can thus also apply it here.\\
The upper bound on the distance $D_{C^0}\left(f^*,f\right)$ in the supremum norm is a direct consequence of the third case in Theorem \ref{thm:Rellich-Kondrachov}.

\end{proof}

\end{document}